\documentclass[12pt]{amsart}
\usepackage[usenames]{color}
\usepackage{amssymb}
\usepackage{amsmath}
\usepackage{amsthm}
\usepackage{amsfonts}
\usepackage{amscd}
\usepackage{graphicx}
\usepackage{bm}

\usepackage[backend=biber,style=alphabetic,sorting=nyt]{biblatex}

\addbibresource{references.bib}
\usepackage{tensor}

\usepackage{hyperref}

\setlength{\textwidth}{6.5in}
\setlength{\oddsidemargin}{.1in}
\setlength{\evensidemargin}{.1in}
\setlength{\topmargin}{-.1in}
\setlength{\textheight}{8.4in}

\theoremstyle{plain}
\newtheorem{theorem}{Theorem}

\newtheorem{lemma}[theorem]{Lemma}
\newtheorem{proposition}[theorem]{Proposition}

\theoremstyle{definition}
\newtheorem{definition}[theorem]{Definition}

\theoremstyle{remark}
\newtheorem{remark}[theorem]{Remark}

\usepackage{indentfirst} 

\usepackage{bm}          
\usepackage{url}
\usepackage{enumitem}

\usepackage[dvipsnames]{xcolor}

\usepackage{MnSymbol} 

\usepackage{arcs}
\usepackage{tikz-cd}
\usepackage[titletoc,title]{appendix}

\usetikzlibrary{arrows}

\AtBeginDocument{}%

\newcommand{\corurl}{red}
\newcommand{\corcite}{ForestGreen}
\newcommand{\corlink}{blue}
\newcommand{\iprod}{\raisebox{8pt}{\scalebox{1}[-2.5]{$\neg\,$}}}

\newcommand{\wed}{\wedge\hspace*{-7.5pt}.\hspace*{3pt}}
\newcommand{\otp}{\underset{\,^\cdot}{\otimes}}

\footnotesep 14pt
\floatsep 28pt plus 2pt minus 4pt      
\textfloatsep 40pt plus 2pt minus 4pt
\intextsep 28pt plus 4pt minus 4pt

\topmargin -0.4in  \headsep 0.2in  \textheight 9.0in
\oddsidemargin 0.25in  \evensidemargin 0.25in  \textwidth 6in


%
%

\numberwithin{equation}{section}  


\hypersetup{linktocpage,colorlinks,urlcolor=\corurl,citecolor=\corcite,linkcolor=\corlink,
pdftitle={Hidden diffeos in a background independent field theory},
pdfauthor={J.F. Barbero, B. D\'{\i}az, J. Margalef-Bentabol, E.J.S. Villase\~nor}}

\begin{document}

\thanks{This work has been supported by the Spanish Ministerio de Ciencia 
Innovación y Universidades-Agencia Estatal de  Investigación Grant No. AEI/PID2020–116567GB-C22 and PID2021-128970OA-I00. B. D. acknowledges support from SECIHTI (M\'exico)  No. 371778. }

\subjclass[2020]{58F05,70H05,58E30,53Z05}

\title[Hidden diffeos in a background independent field theory]{Hidden diffeos in the Hamiltonian formulation of a background independent field theory}
\author[J.F. Barbero G.]{J. Fernando Barbero G.}
\address{J. Fernando Barbero G., \href{https://ror.org/05rtchs68}{Instituto de Estructura de la Materia}, IEM-CSIC, Serrano 123, 28006 Madrid, Spain}

\author[B. D\'{\i}az]{Bogar D\'{\i}az}
\address{Bogar D\'{\i}az, \href{https://ror.org/03n6nwv02}{Universidad Politécnica de Madrid}, Group of Biometrics, Biosignals, Security and Smart Mobility (GB2S), E.T.S.I. Minas y Energía, C. de Ríos Rosas, 21, 28003, Madrid, Spain \\ \href{https://ror.org/01tmp8f25}{Universidad Nacional Aut\'onoma de M\'exico}, Departamento de F\'isica de Altas Energ\'ias, Instituto de Ciencias Nucleares, Apartado Postal 70-543, Ciudad de M\'exico, 04510, M\'exico}

\author[J. Margalef-Bentabol]{Juan Margalef-Bentabol}
\address{Juan Margalef-Bentabol, \href{https://ror.org/03ths8210}{Universidad Carlos III de Madrid}, Departamento de Matemáticas, Avenida de la Universidad, 30 (edificio Sabatini), 28911, Legan\'es (Madrid), España}

\author[E.J.S. Villase\~nor]{Eduardo J.S. Villase\~nor}
\address{Eduardo J.S. Villase\~nor, \href{https://ror.org/03ths8210}{Universidad Carlos III de Madrid}, Departamento de Matemáticas, Avenida de la Universidad, 30 (edificio Sabatini), 28911, Legan\'es (Madrid), España}

\begin{abstract}

    We analyze from a geometric perspective the Hamiltonian formulation of a recent modification of the Husain-Kuchař model where, while preserving the connection as a dynamical variable, the other field is restricted to be the exterior covariant derivative of a Lie algebra-valued function. We prove that 3-dimensional diffeomorphisms can be accommodated among the local gauge transformations of the model in addition to the internal gauge symmetries.
\end{abstract}

\maketitle




%
%
\section{Introduction}{\label{sec_intro}}

Among the different approaches to the quantization of general relativity (GR), those taking classical Hamiltonian methods as the starting point have received much attention in the last decades. One such approach, loop quantum gravity (LQG), has inspired much work and is perceived as one of the leading contenders in this quest. However, despite having provided several tantalizing insights and interesting ideas, it is fair to say that we are still far from the satisfactory completion of this program.

A useful way to gain intuition about the several standing obstacles on the way to completing LQG is to consider simplified models that isolate some of its most relevant features but lack others. One of the main perceived difficulties in any quantum gravity program is dealing with diffeomorphisms. It is thus interesting to study diff-invariant field theories with dynamics simpler than the one corresponding to general relativity. A pioneering step in this direction is the so-called HK model proposed by V. Husain and K. Kuchař in \cite{HK}, where the authors introduced a field theory with some remarkable features. Its Hamiltonian description is similar to the Ashtekar formulation \cite{Ashtekar1, Ashtekar2} --both share the same phase space-- although with a crucial difference: the scalar constraint responsible for the ``non-trivial'' dynamics of GR is not present. This means that the HK model can be used as a testbed to study the quantization of diff-invariant theories in the Ashtekar phase space in a setting simpler than that of full general relativity.

Currently, several models related to the original HK proposal are known \cite{H,nosT,nosT2}. A recent and intriguing addition to this family of theories can be found in \cite{HM}. As we discuss later, this latest model is very similar to the original one but presents some surprising features, particularly regarding the constraints in the Hamiltonian formulation, which have been originally obtained by relying on the traditional Dirac ``algorithm''. The most striking of them is the apparent lack of a generator of local spatial diffeomorphisms. Here, we will study the same model using the geometric Hamiltonian methods proposed in \cite{GNH1,GNH2,GNH3,margalef2018towards,BPV}. We will devote special attention to obtaining the concrete form of the Hamiltonian vector fields. As we will show in detail, 3-dimensional diffeomorphisms \textit{can be found} among the local gauge transformations of the model despite the fact that they appear to be hidden \cite{HM}. 

There is an important comment that we want to make. Throughout this work, we will gloss over functional analytic aspects, which are not essential for our main result on the symmetry under diffeomorphisms. What we do here is, essentially, to rewrite the field equations in a way that lends itself easily to the interpretation of the dynamics of the model. Functional analytic issues would play a central role regarding the existence, uniqueness, and continuous dependence of the solutions to the field equations on initial data, issues that we do not care about in the present work. 

The paper is organized as follows. After this introduction, we devote Section \ref{sec_preliminary} to some preliminary material and explain our notation. Afterwards, in Section \ref{sec_HM_model}, we present the model considered here. We study its Hamiltonian formulation in Section \ref{sec_hamiltonian}. As the problem of determining the precise form of the Hamiltonian vector fields is central to the paper, we devote Section \ref{sec_ham_vector_fields} in full to this issue. We end with our conclusions.

%
%
\section{Preliminary results and notation}{\label{sec_preliminary}}
In this paper, we consider trivial principal (smooth) \( G \)-bundles over smooth manifolds \( M \), denoted as \( G \curvearrowright P \overset{\pi}{\to} M \), where \( G \) is a semisimple Lie group, with Lie algebra \( \mathfrak{g} \), acting on $P$. The archetypal example \( G = SU(2) \) will be the focus of the last section of the paper. Since the bundle is trivial, there exists a global section \( s \) of \( P \overset{\pi}{\to} M \). We denote by \( sg \) the corresponding global section induced by a smooth map \( g: M \to G \), namely \(x\mapsto sg(x)=s(x)\cdot g(x)\), where $\cdot$ is the right action of \(G\) on \(P\). Since the group action is fiber-preserving  and free and transitive on each fiber, every global section can be written in this form and the principal bundle is (non-canonically)  diffeomorphic to the product bundle \(M\times G\).

Using a notation similar to that in \cite{Freed1995} to deal with $\mathfrak{g}$-valued differential forms, given a connection one-form \( \alpha \in \Omega^1(P, \mathfrak{g}) \), with curvature  
\[
F_\alpha := \mathrm{d}\alpha + \frac{1}{2} [\alpha \wed \alpha] \in \Omega^2(P, \mathfrak{g}),
\]  
(as explained in detail below, ${[\cdot \wed \!\cdot{}\!]}$ denotes that the forms are wedge-multiplied while the $\mathfrak{g}$ coefficients are Lie-Bracket-multiplied), we define  
\[
A := s^*\alpha \in \Omega^1(M, \mathfrak{g}), \quad F_A := s^*F_\alpha = \mathrm{d}A + \frac{1}{2} [A \wed A] \in \Omega^2(M, \mathfrak{g})\,,
\]  
by pulling-back through $s$ the corresponding fields from $P$ to $M$. If we now use \( sg \) instead of \( s \) to construct the corresponding fields on \( M \), we obtain  
\[
(sg)^*\alpha = \mathrm{Ad}(g^{-1})A + g^{-1} \mathrm{d} g, \quad (sg)^*F_\alpha = \mathrm{Ad}(g^{-1}) F_A,
\]  
where, as usual, \( g^{-1} dg \) denotes the pullback to \( M \) by \( g \) of the (left) Maurer-Cartan form on \( G \).   The mappings  
\[
A \mapsto \mathrm{Ad}(g^{-1})A + g^{-1} \mathrm{d} g, \quad F_A \mapsto \mathrm{Ad}(g^{-1})F_A
\]  
are known as the internal \( G \)-gauge transformations of \( A \) and \( F_A \), respectively.

To fix our conventions and notation, we give some (mostly standard) definitions and list several results that will be used in the main body of the paper. The Lie product on $\mathfrak{g}$ is denoted by $[\cdot,\!\cdot]$. We dress $\mathfrak{g}$ with an \(\mathrm{Ad}(G)\)-invariant symmetric bilinear form $\langle \cdot,\! \cdot \rangle$  that, in many cases, coincides with (a multiple of)  the Cartan-Killing form of \(\mathfrak{g}\):
\[
\langle u , v \rangle=\kappa(u,v)=\mathrm{trace}_\mathfrak{g}(\mathrm{ad}(u)\circ\mathrm{ad}(u))=\mathrm{trace}_\mathfrak{g}(w\mapsto [u,[v,w])\,.
\]
We  denote the space of $\mathfrak{g}$-valued $p$-forms as $\Omega^p(M,\mathfrak{g})$.   If we choose a linear basis $(b_i)$ of $\mathfrak{g}$,  a $\mathfrak{g}$-valued $p$-form $A$ can be expanded as $A=b_i A^i$ (Einstein summation notation)  with $A^i\in \Omega^p(M)$. The structure constants $\tensor{C}{^k_i_j}$ of the basis $(b_i)$ are defined by
\[
[b_i,b_j]=b_k\tensor{C}{^k_i_j}\,.
\]
We also write 
\[k_{ij}=\langle b_i,b_j\rangle. \]

\begin{definition}\label{ledge}
Let $A\in\Omega^a(M,\mathfrak{g})$, $B\in\Omega^b(M,\mathfrak{g})$ and $(b_i)$ a basis of $\mathfrak{g}$ so that $A=b_iA^i$ and $B=b_iB^i$. We define
\begin{align*}
    \langle A\wed B\rangle&:=A^i\wedge B^j\langle b_i,b_j\rangle=k_{ij}  A^i\wedge B^j\in \Omega^{a+b}(M)\,,\\
    [A\wed B]&:=A^i\wedge B^j [b_i,b_j]=  b_k\tensor{C}{^k _i_j}   A^i\wedge B^j \in \Omega^{a+b}(M, \mathfrak{g})\,.
\end{align*}
These definitions are independent of the choice of basis.  
\end{definition}

\begin{definition} Given $A\in\Omega^1(M,\mathfrak{g})$, we can define the covariant exterior differential $\mathrm{d}_A$ and the curvature $F_A$ as  
\begin{align}
    \mathrm{d}_A B &:=\mathrm{d}B+[A\wed B] \label{cov_diff} \\
    &= b_k \mathrm{d}B^k + [b_i,b_j]  A^i\wedge B^j \nonumber
    = b_k (\mathrm{d}B^k+\tensor{C}{^k_i_j}A^i\wedge B^j )\,,\quad B\in \Omega^b(M,\mathfrak{g}) \,,\nonumber \\
    F_A &:= \mathrm{d}A+\frac{1}{2}[A\wed A] \\
    &= b_k \mathrm{d}A^k+[b_i,b_j] \frac{1}{2}A^i\wedge A^j    \label{curvature_def}
    = b_k\Big(\mathrm{d}A^k+\frac{1}{2} \tensor{C}{^k_i_j} A^i\wedge A^j\Big)\,.   
\nonumber
\end{align}

\end{definition}

The following properties collected in a (direct) Lemma will be extensively used in the rest of the paper:

\begin{lemma}\label{primer_lema} Let $A\in\Omega^a(M,\mathfrak{g})$, $B\in\Omega^b(M,\mathfrak{g})$ and $C\in\Omega^c(M,\mathfrak{g})$, then 
\begin{itemize}
    \item[(P1)] Graded-antisymmetry: \[\displaystyle[ A\wed B]=-(-1)^{ab} [B\wed A]\,.\]
    \item[(P2)] Graded-Jacobi identity: \[ \displaystyle (-1)^{ac}[ [A \wed B]\wed C]+(-1)^{ba}[ [ B \wed C]\wed A]+(-1)^{cb}[ [ C \wed A]\wed B]=0 \,.\]  
    \item[(P3)] Graded-symmetry: \[\displaystyle\langle A\wed B\rangle = (-1)^{ab}\langle B\wed A\rangle\,.\]
    \item[(P4)] Invariance: \[\langle[A\wed B]\wed C\rangle=\langle A\wed [B\wed C]\rangle\,.\]
\end{itemize}

\end{lemma}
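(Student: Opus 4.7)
The plan is to fix a basis $(b_i)$ of $\mathfrak{g}$ and expand each side of each identity in that basis, so that the problem reduces to a purely algebraic identity on $\mathfrak{g}$ together with the graded commutativity of the ordinary wedge product on scalar forms. The bilinearity of $[\cdot,\cdot]$ and $\langle\cdot,\cdot\rangle$ and the basis-independence statement of Definition~\ref{ledge} guarantee that a verification in any one basis suffices.

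For (P1) I would write $[A\wed B]=A^i\wedge B^j[b_i,b_j]$ and push $B^j$ past $A^i$, which produces the sign $(-1)^{ab}$ from graded commutativity of the wedge on the scalar forms $A^i\in\Omega^a(M)$ and $B^j\in\Omega^b(M)$, and then use antisymmetry $[b_i,b_j]=-[b_j,b_i]$ of the Lie bracket to get the final minus sign; the result is $-(-1)^{ab}[B\wed A]$. Property (P3) is handled identically: after commuting $A^i$ and $B^j$ to produce $(-1)^{ab}$, the symmetry $\langle b_i,b_j\rangle=\langle b_j,b_i\rangle$ of the bilinear form finishes the job. For (P4) I would expand both sides as
\[
\langle[A\wed B]\wed C\rangle=A^i\wedge B^j\wedge C^k\,\langle[b_i,b_j],b_k\rangle,\qquad
\langle A\wed[B\wed C]\rangle=A^i\wedge B^j\wedge C^k\,\langle b_i,[b_j,b_k]\rangle,
\]
and invoke the $\mathrm{Ad}(G)$-invariance of $\langle\cdot,\cdot\rangle$, which at the Lie algebra level is equivalent to $\langle[u,v],w\rangle=\langle u,[v,w]\rangle$, so the coefficients coincide termwise.

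The one identity that requires a little bookkeeping is the graded Jacobi identity (P2), and this is the only step I anticipate as a minor obstacle. After expansion, each of the three terms is a multiple of $A^i\wedge B^j\wedge C^k$ times a double bracket of the basis elements, but the wedges appear in the orders $A\wedge B\wedge C$, $B\wedge C\wedge A$, and $C\wedge A\wedge B$, respectively. I would carefully cyclically permute these triple wedges back to the order $A^i\wedge B^j\wedge C^k$, tracking the graded commutativity signs: for the second term one picks up $(-1)^{a(b+c)}$ and for the third $(-1)^{c(a+b)}$. Combining these with the prefactors $(-1)^{ac}$, $(-1)^{ba}$, $(-1)^{cb}$ in the statement, a short computation shows that all three terms carry the common sign $(-1)^{ac}$, and the remaining sum of double brackets is exactly the left-hand side of the ordinary Jacobi identity $[[b_i,b_j],b_k]+[[b_j,b_k],b_i]+[[b_k,b_i],b_j]=0$ in $\mathfrak{g}$.

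Altogether the lemma is a mechanical verification; the only point that merits care is matching the statement's $(-1)^{ac},(-1)^{ba},(-1)^{cb}$ prefactors with the signs that arise from reordering the scalar wedge products in (P2). No further ingredients beyond the definitions in Definition~\ref{ledge}, the antisymmetry and Jacobi identity of $[\cdot,\cdot]$, and the symmetry and invariance of $\langle\cdot,\cdot\rangle$ are needed.
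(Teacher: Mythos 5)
Your proof is correct: the basis expansion reduces each identity to antisymmetry, the Jacobi identity, symmetry, and $\mathrm{ad}$-invariance on $\mathfrak{g}$, and your sign bookkeeping in (P2) checks out, since $(-1)^{ba}(-1)^{a(b+c)}=(-1)^{ac}$ and $(-1)^{cb}(-1)^{c(a+b)}=(-1)^{ac}$, so all three terms share the prefactor $(-1)^{ac}$ and the ordinary Jacobi identity finishes the argument. The paper gives no proof, labelling the lemma as ``direct'' and merely attributing (P1)--(P2) to the graded Lie superalgebra structure of $\Omega(M,\mathfrak{g})$ and (P4) to the invariance of $\langle\cdot,\cdot\rangle$ in a remark, so your computation is exactly the verification the authors leave to the reader.
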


\begin{remark}
(P1) and (P2) are a consequence of $\Omega(M,\mathfrak{g})=\bigoplus_k \Omega^k(M,\mathfrak{g})$ being a graded Lie superalgebra. (P4) is a consequence of the invariance of $\langle\cdot,\!\cdot\rangle$ and it is very useful to move brackets inside the arguments of  $\langle\cdot,\!\cdot\rangle$.
\end{remark}

\noindent Some other useful and standard differential geometry results are listed in the following Lemma that we also give without proof.
\begin{lemma}
Let $A\in \Omega^1(M,\mathfrak{g})$, $B\in\Omega^b(M,\mathfrak{g})$, $C\in\Omega^c(M,\mathfrak{g})$, and $X\in\mathfrak{X}(M)$, then:
\begin{itemize}
    \item[(P5)]  \(X\iprod [B\wed C]=[(X\iprod B)\wed C]+(-1)^b[B\wed (X\iprod C)]\,,\)

\vspace*{-.5mm}
    
    \item[(P6)]  \(\mathrm{d}_A[B\wed C]=[\mathrm{d}_AB\wed C]+(-1)^b[B\wed \mathrm{d}_AC]\,,\phantom{\big{|}}\)
    \item[(P7)] \( \mathrm{d}\langle B\wed C\rangle=\langle \mathrm{d}_A B\wed C\rangle+(-1)^b\langle B\wed \mathrm{d}_AC\rangle \,,\phantom{\big{|}}\)
    \item[(P8)] \(\mathrm{d}_A^2 B=[F_A\wed B]\phantom{\big{|}}\,.\)
\end{itemize}
\end{lemma}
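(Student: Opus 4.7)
The overall strategy is to derive each identity from the definitions of $\mathrm{d}_A$, $F_A$, $[\cdot\wed\cdot]$, and $\langle\cdot\wed\cdot\rangle$, combining the standard graded Leibniz rules for $\mathrm{d}$ and $X\iprod$ on $\Omega^*(M)$ with the algebraic properties (P1)--(P4) already in hand. A single auxiliary identity will do most of the work, namely the Jacobi-type relation
\[
[A\wed[B\wed C]]=[[A\wed B]\wed C]+(-1)^{b}[B\wed[A\wed C]],\qquad A\in\Omega^1(M,\mathfrak{g}),
\]
which I would extract by applying (P2) with degrees $(1,b,c)$ and then using (P1) to reorganize each of the three summands so that $A$ is pulled out of the innermost bracket.

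For (P5), I would expand $B=b_iB^i$ and $C=b_jC^j$ so that $[B\wed C]=[b_i,b_j]\,B^i\wedge C^j$; since $X\iprod$ leaves $\mathfrak{g}$-coefficients untouched and acts as a graded derivation on $\Omega^*(M)$, the identity follows at once. For (P6), the left-hand side is $\mathrm{d}[B\wed C]+[A\wed[B\wed C]]$; the ordinary-$\mathrm{d}$ piece yields $[\mathrm{d}B\wed C]+(-1)^{b}[B\wed\mathrm{d}C]$ by the graded Leibniz rule applied component-wise to $[b_i,b_j]B^i\wedge C^j$, and the remaining piece $[A\wed[B\wed C]]$ is resolved exactly by the Jacobi-type identity above, producing the two $[A\wed\cdot]$-terms demanded by the definitions of $\mathrm{d}_AB$ and $\mathrm{d}_AC$ on the right.

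For (P7), the ordinary Leibniz rule gives $\mathrm{d}\langle B\wed C\rangle=\langle\mathrm{d}B\wed C\rangle+(-1)^{b}\langle B\wed\mathrm{d}C\rangle$; writing $\mathrm{d}=\mathrm{d}_A-[A\wed\cdot]$ on each argument, the statement reduces to the vanishing $\langle[A\wed B]\wed C\rangle+(-1)^{b}\langle B\wed[A\wed C]\rangle=0$, which follows from invariance (P4) used to move the bracket, combined with the antisymmetry $[B\wed A]=-(-1)^{b}[A\wed B]$ from (P1). For (P8), I would compute $\mathrm{d}_A^2B=\mathrm{d}(\mathrm{d}B+[A\wed B])+[A\wed(\mathrm{d}B+[A\wed B])]$; invoking $\mathrm{d}^2=0$ together with $\mathrm{d}[A\wed B]=[\mathrm{d}A\wed B]-[A\wed\mathrm{d}B]$ (the sign $-$ coming from $A$ being a $1$-form), the cross terms containing $\mathrm{d}B$ cancel, leaving $[\mathrm{d}A\wed B]+[A\wed[A\wed B]]$. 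Applying the Jacobi-type identity with $B$ replaced by $A$ gives $[A\wed[A\wed B]]=\tfrac{1}{2}[[A\wed A]\wed B]$, and the definition of $F_A$ concludes the computation.

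The main obstacle is purely bookkeeping: threading the parity signs through the graded Jacobi identity (P2) together with the antisymmetry (P1) is where sign errors are easy to make. Once the Jacobi-type relation displayed above is securely in place, all four statements follow from a few lines of routine manipulation.
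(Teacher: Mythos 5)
The paper states this lemma explicitly \emph{without proof}, so there is no in-text argument to compare against; your proposal is correct and follows the standard route. I checked the sign bookkeeping: the Jacobi-type relation $[A\wed[B\wed C]]=[[A\wed B]\wed C]+(-1)^{b}[B\wed[A\wed C]]$ does follow from (P2) with degrees $(1,b,c)$ after two applications of (P1); (P5)--(P7) then reduce exactly as you describe (for (P7), the needed cancellation $\langle[A\wed B]\wed C\rangle+(-1)^{b}\langle B\wed[A\wed C]\rangle=0$ indeed comes from (P4) plus $[B\wed A]=-(-1)^{b}[A\wed B]$); and in (P8) the specialization $B=A$ of the Jacobi-type relation correctly yields $[A\wed[A\wed B]]=\tfrac{1}{2}[[A\wed A]\wed B]$, which is the only place a factor could plausibly go wrong.
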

\begin{remark}
    If $X\in\mathfrak{X}(M)$ is a smooth vector field, $A\in\Omega^a(M,\mathfrak{g})$ can act on $X$ to give $X\iprod A\in \Omega^{a-1}(M,\mathfrak{g})$  through $X\iprod A=b_i (X\iprod A^i)$. Of course, the definition is independent of the basis. 
\end{remark}

%
%
\section{The Husain-Mehmood model}{\label{sec_HM_model}}

Let $\Sigma$ be a closed, orientable, 3-manifold (therefore parallelizable). Let us introduce the 4-manifold $M=\mathbb{R}\times\Sigma$ foliated by ``spatial sheets'' $\Sigma_\tau:=\{\tau\}\times\Sigma$ ($\tau\in\mathbb{R}$). In this setting one has a canonical evolution vector field $\partial_{\mathrm{t}}\in \mathfrak{X}(M )$, transverse to every $\Sigma_\tau$ and defined by the tangent vectors to the curves $c_\sigma:\mathbb{R}\rightarrow M:\tau\mapsto (\tau,\sigma)$. There is also a scalar field $t\in C^\infty(M )$ defined as $t:M \rightarrow \mathbb{R}:(\tau,\sigma)\mapsto \tau$ and such that ${\partial_{\mathrm{t}}}\iprod \mathrm{d} t=1$.

Let us consider the action
\begin{equation}\label{action_HM}
S_{\mathrm{HM}}(\Phi,\mathrm{A})=\frac{1}{2}\int_{[\tau_1,\tau_2]\times \Sigma}\langle[\mathrm{d}_{\mathrm{A}}\Phi\wed \mathrm{d}_{\mathrm{A}}\Phi]\wed \mathrm{F}_{\mathrm{A}}\rangle\,.
\end{equation}
Here, the basic fields are  $\Phi\in\Omega^0(M ,\mathfrak{g})$ and $\mathrm{A}\in\Omega^1(M ,\mathfrak{g})$,  with prescribed values on the hypersurfaces \(\Sigma_{\tau_1}\) and \(\Sigma_{\tau_2}\).  
As can be seen, \eqref{action_HM} is similar to the HK action \cite{HK}
\begin{equation}\label{action_HK}
S_{\mathrm{HK}}(e,\mathrm{A})=\frac{1}{2}\int_{[\tau_1,\tau_2]\times \Sigma}\langle[e\wed e]\wed \mathrm{F}_\mathrm{A}\rangle\,.
\end{equation}
where $e\in\Omega^1(M,\mathfrak{su}(2))$. In fact, \eqref{action_HK} can just be obtained from \eqref{action_HK} by replacing $e$ by $\mathrm{d}_\mathrm{A}\Phi$ (for $\mathfrak{g}=\mathfrak{su(2)}$ in its original version). Consequently, one would expect the dynamics of these two models to be closely related. This is given by the corresponding Euler-Lagrange equations.

\begin{proposition}
    The field equations corresponding to the action principle \eqref{action_HM} are:
    \begin{subequations}\label{field_eqs}
\begin{align}
&{\langle\cdot\wed [[F_\mathrm{A}\wed\Phi]\wed F_\mathrm{A}]  \rangle}=0\,,\label{field_eqs_1}\\
&{\langle\cdot\wed \mathrm{d}_\mathrm{A}[[F_\mathrm{A}\wed\Phi]\wed \Phi] \rangle}=0\,.\label{field_eqs_2}
\end{align}
\end{subequations}
\end{proposition}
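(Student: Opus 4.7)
The plan is to compute the first variations of $S_{\mathrm{HM}}$ with respect to $\Phi$ and $\mathrm{A}$ separately, discard the boundary terms coming from the prescribed data on $\Sigma_{\tau_1}\cup\Sigma_{\tau_2}$, and then massage each resulting bulk integrand into the form displayed in \eqref{field_eqs} with the help of the identities (P1)--(P8), the Bianchi identity $\mathrm{d}_\mathrm{A}F_\mathrm{A}=0$, and the special case $\mathrm{d}_\mathrm{A}^2\Phi=[F_\mathrm{A}\wed\Phi]$ of (P8).

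For the $\Phi$-variation I would first use $\delta(\mathrm{d}_\mathrm{A}\Phi)=\mathrm{d}_\mathrm{A}(\delta\Phi)$, together with the fact that graded-antisymmetry (P1) makes $[\mathrm{d}_\mathrm{A}\Phi\wed\mathrm{d}_\mathrm{A}\Phi]$ symmetric under swap of its two $1$-form arguments, to obtain $\delta[\mathrm{d}_\mathrm{A}\Phi\wed\mathrm{d}_\mathrm{A}\Phi]=2[\mathrm{d}_\mathrm{A}(\delta\Phi)\wed\mathrm{d}_\mathrm{A}\Phi]$. Relocating the inner bracket via (P4) and integrating by parts via (P7) yields the bulk integrand $-\langle\delta\Phi\wed\mathrm{d}_\mathrm{A}[\mathrm{d}_\mathrm{A}\Phi\wed F_\mathrm{A}]\rangle$. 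A single application of (P6) combined with Bianchi then gives $\mathrm{d}_\mathrm{A}[\mathrm{d}_\mathrm{A}\Phi\wed F_\mathrm{A}]=[[F_\mathrm{A}\wed\Phi]\wed F_\mathrm{A}]$, and the arbitrariness of $\delta\Phi$ produces \eqref{field_eqs_1}.

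For the $\mathrm{A}$-variation, the relevant identities are $\delta F_\mathrm{A}=\mathrm{d}_\mathrm{A}(\delta\mathrm{A})$ and $\delta(\mathrm{d}_\mathrm{A}\Phi)=[\delta\mathrm{A}\wed\Phi]$, producing two contributions. I would reshape the first one, $\langle[[\delta\mathrm{A}\wed\Phi]\wed\mathrm{d}_\mathrm{A}\Phi]\wed F_\mathrm{A}\rangle$, by applying the invariance property (P4) twice to factor $\delta\mathrm{A}$ out. The second contribution, $\tfrac12\langle[\mathrm{d}_\mathrm{A}\Phi\wed\mathrm{d}_\mathrm{A}\Phi]\wed\mathrm{d}_\mathrm{A}(\delta\mathrm{A})\rangle$, is brought to a form with $\delta\mathrm{A}$ undifferentiated using (P3) and (P7) for an integration by parts, followed by (P6) and Bianchi to evaluate $\tfrac12\mathrm{d}_\mathrm{A}[\mathrm{d}_\mathrm{A}\Phi\wed\mathrm{d}_\mathrm{A}\Phi]=[[F_\mathrm{A}\wed\Phi]\wed\mathrm{d}_\mathrm{A}\Phi]$.

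The decisive step, and the main source of bookkeeping headaches, is recognizing that the sum of the two integrands equals $\mathrm{d}_\mathrm{A}[[F_\mathrm{A}\wed\Phi]\wed\Phi]$. This is checked by expanding the latter with (P6), killing $\mathrm{d}_\mathrm{A}F_\mathrm{A}$ by Bianchi, and then using (P1) both to rewrite $[F_\mathrm{A}\wed\mathrm{d}_\mathrm{A}\Phi]=-[\mathrm{d}_\mathrm{A}\Phi\wed F_\mathrm{A}]$ and to push the $0$-form $\Phi$ past the remaining brackets. Once this identification is made, arbitrariness of $\delta\mathrm{A}$ yields \eqref{field_eqs_2}. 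The whole derivation is essentially algorithmic; the real difficulty lies in keeping the graded signs in (P1), (P3), (P4), and (P6) consistent across several rewritings.
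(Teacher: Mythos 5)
Your proposal is correct and takes essentially the same route as the paper: the paper's proof merely records the final form of the variation, $\delta S_{\mathrm{HM}}=\int\big(-\langle \delta\Phi\wed [[F_\mathrm{A}\wed\Phi]\wed F_\mathrm{A}] \rangle+ \langle \delta \mathrm{A}\wed \mathrm{d}_\mathrm{A} [[ F_\mathrm{A}\wed \Phi ]\wed \Phi ] \rangle \big)$, citing the identities of Section \ref{sec_preliminary}, and your sketch supplies exactly the intermediate steps (P1)--(P8), Bianchi, and the boundary-killing integrations by parts that the paper leaves implicit. All signs and the key recombination $[\Phi\wed[\mathrm{d}_\mathrm{A}\Phi\wed F_\mathrm{A}]]+[[F_\mathrm{A}\wed\Phi]\wed\mathrm{d}_\mathrm{A}\Phi]=\mathrm{d}_\mathrm{A}[[F_\mathrm{A}\wed\Phi]\wed\Phi]$ check out.
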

\begin{proof}
By using the results listed in the preceding section--and the vanishing of the variations of the fields on \(\Sigma_{\tau_1}\) and \(\Sigma_{\tau_2}\)-- the variations of the action \eqref{action_HM} can be written in the form:
\begin{equation}\label{variation}
\delta S_{\mathrm{HM}}=\int_{[\tau_1,\tau_2]\times \Sigma }\big(-\langle \delta\Phi\wed [[F_\mathrm{A}\wed\Phi]\wed F_\mathrm{A}] \rangle+ \langle \delta \mathrm{A}\wed \mathrm{d}_\mathrm{A} [[ F_\mathrm{A}\wed \Phi ]\wed \Phi ] \rangle \big)\,.
\end{equation}
The field equations \eqref{field_eqs} can be immediately read from this expression.
\end{proof}

Notice that for a semisimple Lie algebra $\mathfrak{g}$, the Cartan-Killing form is non-degenerate and, when \(\langle \cdot,\!\cdot \rangle =\kappa(\cdot,\!\cdot)\),  the field equations \eqref{field_eqs} reduce to
\begin{subequations}
\begin{align}
& [[F_\mathrm{A}\wed \Phi ]\wed F_\mathrm{A}]=0\,,\label{field_eqs_semisimple_1}\\
& \mathrm{d}_\mathrm{A}[[F_\mathrm{A}\wed \Phi]\wed \Phi]=0\,.\label{field_eqs_semisimple_2}
\end{align}
\end{subequations}
Here it is important to mention that, despite the claims made in \cite{HM}, equation \eqref{field_eqs_semisimple_1} does not imply \([F_\mathrm{A}\wed\Phi]=0\).

A useful way to unravel the meaning of the field equations \eqref{field_eqs} is to look at the Hamiltonian formulation corresponding to the action \eqref{action_HM}. This is the purpose of the following sections.

%
%
\section{The Hamiltonian formulation}{\label{sec_hamiltonian}}

To get the Hamiltonian formulation of the model considered here, we start by computing the Lagrangian. A standard way to do it on the manifold $M =\mathbb{R}\times\Sigma$ is by writing the action as
\[
S_{\mathrm{HM}}=\int_{M }\mathcal{L}=\!\!\int_{\mathbb{R}\times\Sigma}\!\!\mathrm{d}t\wedge\partial_{\mathrm{t}}\iprod\mathcal{L}=\!\!\int_{\mathbb{R}}\!\!\mathrm{d}t\!\!\int_{\Sigma_t} \!\partial_{\mathrm{t}}\iprod\mathcal{L}
=\!\!\int_{\mathbb{R}}\!\!\mathrm{d}t\!\!\int_{\Sigma}\jmath_t^*\partial_{\mathrm{t}}\iprod\mathcal{L}\!=\!\!\int_{\mathbb{R}}\!\!\!L(t)\,\mathrm{d}t =\!\!\!\int_{\mathbb{R}}\!\!L\,,
\]
where $\jmath_\tau:\Sigma\rightarrow\Sigma_\tau:\sigma\mapsto(\tau,\sigma)$ and
\[
L:\mathbb{R}\rightarrow \mathbb{R}:\tau\mapsto\int_\Sigma \jmath_\tau^*\partial_{\mathrm{t}}\iprod\mathcal{L}\,.
\]
A straightforward computation gives
\begin{align*}
  L(\tau)  & =\int_\Sigma \Big(\langle [\jmath_\tau^*(\pounds_{\mathrm{t}}\Phi)\wed \jmath_\tau^*(\mathrm{d}_\mathrm{A}\Phi)]\wed \jmath_\tau^*F_{\mathrm{A}} \rangle +\langle [[(\jmath_\tau^*\partial_{\mathrm{t}}\iprod \mathrm{A})\wed \jmath_\tau^*\Phi]\wed\jmath_\tau^*(\mathrm{d}_\mathrm{A}\Phi)]\wed \jmath_\tau^* F_\mathrm{A}\rangle \\
   &\hspace*{10mm}+\frac{1}{2}\langle [ \jmath_\tau^*(\mathrm{d}_\mathrm{A}\Phi)  \wed \jmath_\tau^*(\mathrm{d}_\mathrm{A}\Phi)]\wed \jmath_\tau^*\pounds_{\mathrm{t}}\mathrm{A}\rangle-\frac{1}{2}\langle  [ \jmath_\tau^*(\mathrm{d}_\mathrm{A}\Phi) \wed \jmath_\tau^*(\mathrm{d}_\mathrm{A}\Phi)]\wed \jmath_\tau^*(\mathrm{d}_\mathrm{A}(\partial_{\mathrm{t}}\iprod \mathrm{A}) )\rangle\Big)\,,
\end{align*}
where $\pounds_{\mathrm{t}}$ denotes the Lie derivative along the vector field $\partial_{\mathrm{t}}$. Let us look at the different terms appearing in this expression:

\medskip

\begin{itemize}
    \item  A basic object is the curve in $\Omega^0(\Sigma,\mathfrak{g})$ defined by
\[
\phi:\mathbb{R}\rightarrow\Omega^0(\Sigma,\mathfrak{g}):\tau\mapsto\phi(\tau)=\jmath_\tau^* \Phi\,.
\]
In terms of $\phi$ we have $\jmath_\tau^*(\pounds_{\mathrm{t}}\Phi)=\dot{\phi}(\tau)$ and, also,
\begin{align*}
 \jmath_\tau^*(\mathrm{d}_\mathrm{A} \Phi)
& =\jmath_\tau^*(\mathrm{d}\Phi+[\mathrm{A}\wed \Phi])=\mathrm{d}(\jmath_\tau^*\Phi)+[\jmath_\tau^*{\mathrm{A}\wed \jmath_\tau^*\Phi}]\\
&=\mathrm{d}\phi(\tau)+[A(\tau)\wed\phi(\tau)] =\mathrm{d}_{A(\tau)}\phi(\tau)=: (\mathrm{d}_A\phi)(\tau) \,,  
\end{align*}
where we have introduced the curve \[A:\mathbb{R}\rightarrow\Omega^1(\Sigma, \mathfrak{g}):\tau\mapsto A(\tau)=\jmath_\tau^*\mathrm{A}\,,\] with \(\dot{A}(\tau)=\jmath_\tau^*(\pounds_{\mathrm{t}}\mathrm{A})\phantom{\Big)}\).  

\item   We also have
\begin{align*}
\jmath_\tau^* F_\mathrm{A}&=\jmath_\tau^*\big(\mathrm{d} \mathrm{A}+\frac{1}{2}[\mathrm{A}\wed \mathrm{A}]\big)=\mathrm{d}\jmath_\tau^*\mathrm{A}+\frac{1}{2}[\jmath_\tau^*\mathrm{A}\wed\jmath_\tau^*\mathrm{A}]\\
&=\mathrm{d}A(\tau)+\frac{1}{2}[A(\tau)\wed A(\tau)] =:F_{A(\tau)}\,,\\
\jmath_\tau^*(\mathrm{d}_\mathrm{A}(\partial_{\mathrm{t}}\iprod \mathrm{A}))&=\mathrm{d}a(\tau)+\frac{1}{2}[A(\tau)\wed a(\tau)]=\mathrm{d}_{A(\tau)}a(\tau)=:(\mathrm{d}_Aa)(\tau)\,,
\end{align*}
where $a:\mathbb{R}\rightarrow \Omega^0(\Sigma,\mathfrak{g}): \tau\mapsto a(\tau)= \jmath_\tau^*(\partial_{\mathrm{t}}\iprod \mathrm{A})$. 
\end{itemize}
Hence
\begin{align*}
  L(\tau)&=\int_\Sigma \Big(\langle [\dot{\phi}(\tau)\wed (\mathrm{d}_A\phi)(\tau)]\wed (F_A)(\tau)\rangle +\langle [[a(\tau)\wed \phi(\tau)]\wed (\mathrm{d}_A\phi)(\tau)]\wed (F_A)(\tau)\rangle \\
   &\hspace*{5mm}+\frac{1}{2}\langle [(\mathrm{d}_A\phi)(\tau) \wed (\mathrm{d}_A\phi)(\tau)]\wed\dot{A}(\tau)\rangle-\frac{1}{2}\langle  [(\mathrm{d}_A\phi)(\tau) \wed (\mathrm{d}_A\phi)(\tau)]\wed (\mathrm{d}_Aa)(\tau) \rangle\Big)\,.
\end{align*}
From this expression, we can identify the configuration space $Q$ of the model,  with typical points $(\phi,A,a)\in Q$ :
\[
Q=\Omega^0(\Sigma,\mathfrak{g})\times\Omega^1(\Sigma,\mathfrak{g})\times\Omega^0(\Sigma,\mathfrak{g})
\]
and its (trivial) tangent bundle $TQ\cong Q\times Q$ with elements $\big((\phi,A,a),(v_\phi,v_A,v_{a})\big)$, that we will sometimes denote with the shorthand notation $v_q=(q,v)$. $L(t)$ can be obtained from the Lagrangian $\mathsf{L}:TQ\rightarrow\mathbb{R}$ given by
\begin{align}\label{Lagrangian1}
\begin{split}
  \mathsf{L}(v_q)&=\int_\Sigma \Big(\langle [v_\phi\wed \mathrm{d}_A\phi]\wed F_A\rangle +\langle [[a\wed \phi]\wed \mathrm{d}_A\phi]\wed F_A\rangle \\
   &\hspace*{42mm}+\frac{1}{2}\langle [\mathrm{d}_A\phi \wed \mathrm{d}_A\phi]\wed v_A\rangle-\frac{1}{2}\langle  [\mathrm{d}_A\phi \wed \mathrm{d}_A\phi]\wed \mathrm{d}_A a \rangle\Big)
\end{split}
\end{align}
by introducing a suitable space of curves $\mathcal{P}$ in $TQ$ parametrized by a time variable $t$ as $L(t)=\mathsf{L}\big(q(t),\dot{q}(t)\big)$. Some direct manipulations allow us to write \eqref{Lagrangian1} as follows: 
\begin{proposition}
    \label{thm_Lagrangian}
The Lagrangian \eqref{Lagrangian1} defined by the action \eqref{action_HM} is
\begin{equation}\label{Lagrangian2}
 \mathsf{L}(v_q)=\int_\Sigma \Big(\langle [v_\phi\wed \mathrm{d}_A\phi]\wed F_A\rangle +\frac{1}{2}\langle v_A\wed[\mathrm{d}_A\phi \wed \mathrm{d}_A\phi]\rangle+\langle a\wed \mathrm{d}_A[[F_A\wed\phi]\wed\phi]\rangle \Big)\,.
\end{equation}
\end{proposition}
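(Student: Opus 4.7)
The plan is to show that \eqref{Lagrangian1} and \eqref{Lagrangian2} agree term by term, dealing separately with the three groups of summands. The first term $\langle[v_\phi\wed\mathrm{d}_A\phi]\wed F_A\rangle$ is already identical in both expressions, so nothing needs to be done there. For the $v_A$-term, I simply apply graded symmetry (P3): since $v_A$ is a $1$-form and $[\mathrm{d}_A\phi\wed\mathrm{d}_A\phi]$ is a $2$-form, the sign $(-1)^{1\cdot 2}$ is $+1$, so $\langle[\mathrm{d}_A\phi\wed\mathrm{d}_A\phi]\wed v_A\rangle=\langle v_A\wed[\mathrm{d}_A\phi\wed\mathrm{d}_A\phi]\rangle$.

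The substantive part of the proof is to show that the two remaining summands of \eqref{Lagrangian1} combine into $\int_\Sigma\langle a\wed\mathrm{d}_A[[F_A\wed\phi]\wed\phi]\rangle$. My strategy is to rewrite each of the two terms as $\int_\Sigma\langle a\wed(\text{something})\rangle$ and then recognize the sum of the ``somethings'' as $\mathrm{d}_A[[F_A\wed\phi]\wed\phi]$. For the term $-\tfrac12\int_\Sigma\langle[\mathrm{d}_A\phi\wed\mathrm{d}_A\phi]\wed\mathrm{d}_A a\rangle$, I apply P7 with $B=[\mathrm{d}_A\phi\wed\mathrm{d}_A\phi]$ (a $2$-form) and $C=a$; the total-derivative contribution $\mathrm{d}\langle B\wed a\rangle$ integrates to zero by Stokes' theorem on the closed manifold $\Sigma$, yielding $+\tfrac12\int_\Sigma\langle\mathrm{d}_A[\mathrm{d}_A\phi\wed\mathrm{d}_A\phi]\wed a\rangle$. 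Then I expand $\mathrm{d}_A[\mathrm{d}_A\phi\wed\mathrm{d}_A\phi]$ using P6, apply P8 to rewrite each $\mathrm{d}_A^2\phi$ as $[F_A\wed\phi]$, and use graded antisymmetry P1 to merge the two terms. The factor of $2$ will cancel the $\tfrac12$, leaving $\int_\Sigma\langle[[F_A\wed\phi]\wed\mathrm{d}_A\phi]\wed a\rangle$, which P3 turns into $\int_\Sigma\langle a\wed[[F_A\wed\phi]\wed\mathrm{d}_A\phi]\rangle$.

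For the other term $\int_\Sigma\langle[[a\wed\phi]\wed\mathrm{d}_A\phi]\wed F_A\rangle$, I apply invariance P4 twice to migrate the brackets inward and place $a$ in the leftmost slot of $\langle\cdot\wed\cdot\rangle$: once to get $\langle[a\wed\phi]\wed[\mathrm{d}_A\phi\wed F_A]\rangle$, and a second time to obtain $\langle a\wed[\phi\wed[\mathrm{d}_A\phi\wed F_A]]\rangle$. At this point the sum of both contributions has the form $\int_\Sigma\langle a\wed\Psi\rangle$, where
\[
\Psi=[\phi\wed[\mathrm{d}_A\phi\wed F_A]]+[[F_A\wed\phi]\wed\mathrm{d}_A\phi].
\]
To close the argument I must check that $\Psi=\mathrm{d}_A[[F_A\wed\phi]\wed\phi]$. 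Expanding the right-hand side by P6 gives $[\mathrm{d}_A[F_A\wed\phi]\wed\phi]+[[F_A\wed\phi]\wed\mathrm{d}_A\phi]$, and a further use of P6 combined with the Bianchi identity $\mathrm{d}_AF_A=0$ reduces $\mathrm{d}_A[F_A\wed\phi]$ to $[F_A\wed\mathrm{d}_A\phi]$; two applications of graded antisymmetry P1 then turn $[[F_A\wed\mathrm{d}_A\phi]\wed\phi]$ into $[\phi\wed[\mathrm{d}_A\phi\wed F_A]]$, yielding precisely $\Psi$.

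The proof is essentially a bookkeeping exercise, but the main obstacle will be keeping track of the many signs produced by graded (anti)symmetry when reordering nested brackets of forms of different degrees; careful use of P1, P3, and P4, together with the Bianchi identity and Stokes' theorem on the closed $\Sigma$, are the only nontrivial ingredients.
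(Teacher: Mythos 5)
Your proposal is correct and follows essentially the same route as the paper's proof: graded symmetry for the $v_A$ term, integration by parts on the closed $\Sigma$ together with (P6) and (P8) to convert the $\mathrm{d}_A a$ term into $\langle a\wed[[F_A\wed\phi]\wed\mathrm{d}_A\phi]\rangle$, two applications of (P4) to isolate $a$ in the remaining term, and the Bianchi identity to assemble $\mathrm{d}_A[[F_A\wed\phi]\wed\phi]$. The only (cosmetic) difference is that you expand the target expression and match it against $\Psi$, whereas the paper builds the covariant derivative up directly.
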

\begin{proof}
Indeed, we have
\begin{align*}
\langle [[ a\wed \phi]\wed \mathrm{d}_A\phi]\wed F_A \rangle&=\langle [ a\wed\phi ] \wed [ \mathrm{d}_A\phi\wed F_A]\rangle = \langle  a\wed[\phi \wed [\mathrm{d}_A \phi\wed F_A]]\rangle\\
&=-\langle a\wed[[\mathrm{d}_A\phi\wed F_A]\wed\phi] \rangle=\langle a\wed[[F_A\wed \mathrm{d}_A\phi]\wed\phi] \rangle
\end{align*}
Also
\begin{align*}
0=\int_\Sigma -\frac{1}{2}\mathrm{d}_A\langle [\mathrm{d}_A\phi\wed \mathrm{d}_A\phi]\wed a   \rangle&=\int_\Sigma \Big( -\langle[[F_A\wed \phi]\wed \mathrm{d}_A\phi ]\wed a \rangle-\frac{1}{2}\langle[\mathrm{d}_A\phi\wed \mathrm{d}_A\phi]\wed \mathrm{d}_A a\rangle  \Big)
\end{align*}
so that
\[
\int_\Sigma -\frac{1}{2}\langle[\mathrm{d}_A\phi\wed \mathrm{d}_A\phi]\wed \mathrm{d}_A a\rangle=\int_\Sigma\langle a\wed[[F_A\wed \phi]\wed \mathrm{d}_A\phi]\rangle\,.
\]
Finally, we conclude that the two velocity-independent terms in \eqref{Lagrangian1} add up to
\begin{align*}
&\int_\Sigma \Big(\langle [[a\wed \phi]\wed \mathrm{d}_A\phi]\wed F_A\rangle-\frac{1}{2}\langle  [\mathrm{d}_A\phi \wed \mathrm{d}_A\phi]\wed \mathrm{d}_A a \rangle\Big)\\
=&\int_\Sigma\Big(\langle a\wed[[F_A\wed \mathrm{d}_A\phi]\wed \phi]\rangle+\langle a\wed[[F_A\wed\phi]\wed \mathrm{d}_A\phi]  \rangle\Big)\\
=&\int_\Sigma\Big(\langle a\wed[\mathrm{d}_A[F_A\wed\phi]\wed \phi]\rangle+\langle a\wed[[F_A\wed\phi]\wed \mathrm{d}_A\phi]  \rangle\Big)\\
=&\int_\Sigma \langle a\wed \mathrm{d}_A[[F_A\wed \phi]\wed \phi]\rangle\,.
\end{align*}    
\end{proof}

From this Lagrangian, we now get the Hamiltonian formulation for the present model. We start by computing the fiber derivative $\mathsf{FL}:TQ\rightarrow T^*Q, v_q\mapsto\mathsf{FL}(v_q)$,  of the Lagrangian defined in \eqref{Lagrangian2}
\begin{equation}
 w_q\mapsto 
\big(\mathsf{FL}(v_q)\big)(w_q)=\int_\Sigma\Big(\langle[w_\phi\wed \mathrm{d}_A\phi]\wed F_A\rangle+\frac{1}{2}\langle w_A\wed[\mathrm{d}_A\phi\wed \mathrm{d}_A\phi]\rangle\Big)\,,   \label{FD}
\end{equation}
where $w_q=(q,(w_\phi,w_A,w_a))$. Notice that \(\mathsf{FL}\) is neither injective nor surjective. If we denote the elements in the fiber $T_q^*\!Q$ as $\mathrm{p}_q:=(q,(\mathrm{p}_\phi,\mathrm{p}_A,\mathrm{p}_{a}))$,  \(\mathsf{FL}\) can be read as
\begin{align}
&\mathrm{p}_\phi(w_q)=\int_\Sigma\langle w_\phi\wed [\mathrm{d}_A\phi\wed F_A]\rangle\,, \nonumber\\
&\mathrm{p}_A(w_q)=\int_\Sigma\frac{1}{2}\langle w_A\wed[\mathrm{d}_A\phi\wed \mathrm{d}_A\phi]\rangle\,, \label{momenta}\\
&\mathrm{p}_{a}(w_q)\,=\,0\,.\phantom{\frac{1}{2}} \nonumber
\end{align}
These conditions define what in the standard physics parlance is known as the \textit{primary constraint submanifold} in phase space $\mathsf{M}_0:=\mathsf{FL}(TQ)$. This is the subset of $T^*Q$ where the dynamics takes place.  $\mathsf{M}_0$ is, actually, parameterized just by $(\phi,A,a)$. For this reason, the final formulation on $\mathsf{M}_0$ can be interpreted as defined directly on the configuration space $Q$. Once the fiber derivative is known, we can compute the energy 
\[
\mathsf{E}:TQ\rightarrow\mathbb{R}:v_q\mapsto \mathrm{p}_q(v_q)-\mathsf{L}(v_q)\,.
\]
In the present case, this is
\begin{equation}\label{energy}
\mathsf{E}(v_q)=\int_\Sigma \langle a\wed \mathrm{d}_A[\phi\wed[F_A\wed\phi]]\rangle \,.
\end{equation}
The Hamiltonian is defined implicitly by 
\[
\mathsf{H}\circ \mathsf{FL}=\mathsf{E}
\]
as a function \(\mathsf{H}:\mathsf{FL}(TQ)\rightarrow \mathbb{R}\).  
Notice that $\mathsf{H}$ is well defined because, from \eqref{FD},  \(\mathsf{FL}(u_q)=\mathsf{FL}(v_q)\) implies  \(\mathsf{E}(u_q)=\mathsf{E}(v_q)\).   The Hamiltonian is only defined on the image of the fiber derivative $\mathsf{FL}(TQ)$. As the energy is a function only of the fields defining the configuration space, the expression for the Hamiltonian is just \eqref{energy}
\begin{equation}\label{H}
 \mathsf{H}(\mathrm{p}_q)=\int_\Sigma \langle a\wed \mathrm{d}_A[\phi\wed[F_A\wed\phi]]\rangle\,.
\end{equation}
It depends on the momenta $\mathrm{p}_q$ only through their base point $q$.

Vector fields in the phase space considered here $\mathsf{Y}\in\mathfrak{X}(T^*\!Q)$ have the form
\begin{equation}\label{vector_field}
\mathsf{Y}=(\mathsf{Y}_\phi,\mathsf{Y}_A,\mathsf{Y}_{a},\mathsf{Y}_{\mathrm{p} \phi},\mathsf{Y}_{\mathrm{p} A},\mathsf{Y}_{\mathrm{p} a})\,,
\end{equation}
where
\begin{align*}
&\mathsf{Y}_\phi:T^*\!Q\rightarrow\Omega^0(\Sigma,\mathfrak{g})\,,\quad &&\mathsf{Y}_{\mathrm{p} \phi}:T^*\!Q\rightarrow \Omega^0(\Sigma,\mathfrak{g})^*\,,\\
&\mathsf{Y}_A:T^*\!Q\rightarrow\Omega^1(\Sigma,\mathfrak{g})\,,\quad &&\mathsf{Y}_{\mathrm{p} A}:T^*\!Q\rightarrow \Omega^1(\Sigma,\mathfrak{g})^*\,,\\
&\mathsf{Y}_a\,:\,T^*\!Q\rightarrow\Omega^0(\Sigma,\mathfrak{g})\,,\quad &&\mathsf{Y}_{\mathrm{p} a}\,:\,T^*\!Q\rightarrow \Omega^0(\Sigma,\mathfrak{g})^*\,,
\end{align*}
The $\mathsf{Y}_{\mathrm{p} \phi}$, $\mathsf{Y}_{\mathrm{p} A}$, $\mathsf{Y}_{\mathrm{p} a}$ are functions in $T^*\!Q$. They are ``dual'' in the sense that acting, respectively, on objects such as $\mathsf{Y}_\phi$, $\mathsf{Y}_A$, and $ \mathsf{Y}_{a}$ they give real functions in phase space. For instance, let $\mathrm{p}_q\in T^*\!Q$, then $\mathsf{X}_A(\mathrm{p}_q)\in\Omega^1(\Sigma,\mathfrak{g})$ and $\mathsf{Y}_{\mathrm{p} A}(\mathrm{p}_q)\big(\mathsf{X}_A(\mathrm{p}_q)\big)\in\mathbb{R}$. The result can be independent of $\mathrm{p}_q$ [i.e. ``constant'']. Notice that $\mathsf{Y}_{\mathrm{p} A}(\mathsf{X}_A(\cdot{}))$ is a real function in phase space.

Vector fields tangent to $\mathsf{M}_0$ have the form \eqref{vector_field} with
\begin{align}\label{tangent_vector_fields}
\begin{split}
  & \mathsf{Y}_{\mathrm{p} \phi}(\cdot)=\int_\Sigma\big(\langle[\cdot\wed \mathrm{d}_A \mathsf{Y}_\phi]\wed F_A\rangle+\langle[\cdot\wed[\mathsf{Y}_A\wed \phi]]\wed F_A\rangle+\langle[\cdot\wed \mathrm{d}_A\phi]\wed \mathrm{d}_A  \mathsf{Y}_A\rangle\big)\,, \\
  & \mathsf{Y}_{\mathrm{p} A}(\cdot)=\int_\Sigma \big(\langle\cdot\wed [\mathrm{d}_A \mathsf{Y}_\phi\wed \mathrm{d}_A \phi]\rangle+\langle\cdot\wed [[\mathsf{Y}_A\wed \phi]\wed \mathrm{d}_A\phi] \rangle\big)\,, \\
  & \mathsf{Y}_{\mathrm{p} a}=0\,.\phantom{\int}
\end{split}
\end{align}
The pullback of $\mathsf{d H}$ acting on a vector field $\mathsf{Y}_0\in\mathfrak{X}(\mathsf{M}_0)$ is
\begin{align*}
\begin{split}
 \mathsf{d H}(\mathsf{Y}_0)=\int_\Sigma&\Big(\langle \mathsf{Y}_\phi\wed\big([\mathrm{d}_A a\wed[F_A\wed\phi]]+[F_A\wed[\mathrm{d}_A a \wed\phi]]\big)\rangle\\
&\hspace*{-2mm}+\langle \mathsf{Y}_A\wed\big([[\phi\wed a]\wed[F_A\wed \phi]]-[\mathrm{d}_A\phi\wed[\mathrm{d}_A a \wed \phi]]\\
 &\hspace*{1.4cm}+[\phi\wed[[a\wed\phi]\wed F_A]]+[\phi\wed[\mathrm{d}_A a\wed \mathrm{d}_A\phi]]\big)\rangle\phantom{\int}\\
&\hspace*{-2mm}+\langle \mathsf{Y}_{a}\wed \mathrm{d}_A[\phi\wed[F_A\wed \phi]]\rangle\Big)\,.\phantom{\int}
\end{split}
\end{align*}
Notice that, as the Hamiltonian depends only on the base point, the pullback of $\mathsf{d H}$ onto $\mathsf{M}_0$ can be obtained without ever making use of \eqref{tangent_vector_fields}.

We now compute the pullback of the canonical symplectic form onto $\mathsf{M}_0$, which can be obtained from
\[
\Omega(\mathsf{X},\mathsf{Y})=\mathsf{Y}_{\mathrm{p} \phi}(\mathsf{X}_\phi)-\mathsf{X}_{\mathrm{p} \phi}(\mathsf{Y}_\phi)+\mathsf{Y}_{\mathrm{p}A}(\mathsf{X}_A)-\mathsf{X}_{\mathrm{p} A}(\mathsf{Y}_A)+\mathsf{Y}_{\mathrm{p} a}(\mathsf{X}_{a})-\mathsf{X}_{\mathrm{p} a}(\mathsf{Y}_{a})
\]
by making use of \eqref{tangent_vector_fields}. We get
\begin{align*}
  \hspace*{-5mm}\Omega(\mathsf{X}_0,\mathsf{Y}_0)=\int_\Sigma\Big(&\langle \mathsf{Y}_\phi\wed ([\mathsf{X}_A\wed[F_A\wed \phi]]+[F_A\wed[\mathsf{X}_A\wed \phi]])\rangle\\
  &\hspace*{-4mm}+\langle \mathsf{Y}_A\wed([\phi\wed[\mathsf{X}_A\wed \mathrm{d}_A\phi]]-[\mathrm{d}_A\phi\wed[\mathsf{X}_A\wed \phi]]\\
  &\hspace*{1.3cm}-[\phi\wed[\mathsf{X}_\phi\wed F_A]]+[\mathsf{X}_\phi\wed[F_A\wed \phi]])\rangle\Big)\,.
\end{align*}
Now we need to find what conditions we get from $ \mathsf{dH}(\mathsf{Y}_0)=\Omega(\mathsf{X}_0,\mathsf{Y}_0)$ for every $\mathsf{Y}_0$. These will depend on the properties of the invariant form $\langle \cdot,\!\cdot\rangle$. If $\langle \cdot,\!\cdot\rangle$ is degenerate, a condition of the type
\[
\int_\Sigma \langle \mathsf{Y}_\phi\wed Z\rangle=0
\]
for all $\mathsf{Y}_\phi$ will not imply $Z=0$ but only $\langle \cdot \wed Z\rangle=0$. Keeping this in mind, we find the \textit{secondary constraints}
\begin{equation}\label{constraint}
\langle\cdot\wed \mathcal{C}\rangle=0\,,\qquad\qquad \mathcal{C}:=\mathrm{d}_A[\phi\wed[F_A\wed\phi]]\,,
\end{equation}
which define the submanifold 
\begin{equation}
  \mathsf{M}_1:=\mathsf{M}_0\cap \{\mathrm{p}_q\in T^*\!Q\,:\, \langle\cdot\wed \mathcal{C}\rangle=0\}\,,  \label{M1}
\end{equation}
and the following conditions for the components of the Hamiltonian vector fields
\begin{subequations}\label{eqs:X}
\begin{align}
  & \langle\cdot\wed \Big\{[(\mathsf{X}_A-\mathrm{d}_Aa)\wed[F_A\wed\phi]]+[F_A\wed[(\mathsf{X}_A-\mathrm{d}_Aa)\wed\phi]]\Big\}\rangle=0 \label{eq1}\\
  & \langle\cdot\wed \Big\{[(\mathsf{X}_\phi-[\phi\wed a])\wed [F_A\wed\phi]]-[\mathrm{d}_A\phi\wed[(\mathsf{X}_A-\mathrm{d}_Aa)\wed\phi ]]\label{eq2}\\
  &\hspace*{1cm} -[\phi\wed[(\mathsf{X}_\phi-[\phi\wed a])\wed F_A]] +[\phi\wed[(\mathsf{X}_A-\mathrm{d}_Aa)\wed \mathrm{d}_A\phi]]\Big\}\rangle=0\,.\nonumber
\end{align}
\end{subequations}
Notice that they do not involve $\mathsf{X}_{a}$, so this component is arbitrary at this stage. The vector fields found by solving \eqref{eqs:X} must be tangent to the set defined by the constraints \eqref{constraint}. The tangency condition is
\begin{align}
  \label{tangency}\begin{split}
&\langle\cdot\wed \Big\{{\mathrm{d}_A[\mathsf{X}_\phi\wed[F_A\!\!\wed\phi]]}+\!{\mathrm{d}_A[\phi\wed [\mathrm{d}_A \mathsf{X}_A\!\wed \phi]]}\\
&\hspace{3.8cm}+{\mathrm{d}_A[\phi\wed[F_A \wed \mathsf{X}_\phi]]}+{[\mathsf{X}_A \!\wed [\phi\wed[F_A \wed\phi]]]}\rangle\Big\}=0\,. \end{split}
\end{align}
The elements of the Hamiltonian formulation for the action \eqref{action_HM} are summarized in the following proposition, whose proof has been spelled out in the preceding paragraphs
\begin{proposition}\label{th_Hamiltonian}
The dynamics of the model defined by the action \eqref{action_HM} takes place on $\mathsf{M}_1\subset T^*Q$ given by \eqref{M1}, which is defined as the intersection of the primary constraint subset $\mathsf{M}_0$ and the subset defined by the secondary constraints \eqref{constraint}. Furthermore, the components $(\mathsf{X}_\phi,\mathsf{X}_A, \mathsf{X}_{a})$ of the Hamiltonian vector fields restricted to $\mathsf{M}_1$ must satisfy the equations \eqref{eqs:X} and are also subject to the consistency conditions \eqref{tangency}.
\end{proposition}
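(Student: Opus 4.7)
The plan is to follow the geometric (GNH) Hamiltonian procedure step by step, since the statement is essentially a summary of that algorithm applied to the Lagrangian \eqref{Lagrangian2}. First I would compute the fiber derivative $\mathsf{FL}:TQ\to T^*Q$ by varying $\mathsf{L}$ in the velocities; formulas \eqref{FD} and \eqref{momenta} fall out directly, and because $\mathsf{L}$ is independent of $v_a$ one reads off $\mathrm{p}_a=0$. Together with the expressions for $\mathrm{p}_\phi$ and $\mathrm{p}_A$, this identifies the primary constraint submanifold $\mathsf{M}_0=\mathsf{FL}(TQ)$ as the image parametrized by $q=(\phi,A,a)\in Q$. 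The energy $\mathsf{E}(v_q)=\mathrm{p}_q(v_q)-\mathsf{L}(v_q)$ is then purely a function of the base point, because the velocity-dependent pieces of $\mathsf{L}$ are linear in $(v_\phi,v_A)$ with coefficients exactly the momenta \eqref{momenta}; subtracting cancels them and leaves the velocity-independent third term in \eqref{Lagrangian2}, giving \eqref{energy} and hence \eqref{H}.

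Next I would characterize vector fields tangent to $\mathsf{M}_0$ by differentiating the primary constraints \eqref{momenta} along a generic $\mathsf{Y}\in\mathfrak{X}(T^*Q)$; this produces \eqref{tangent_vector_fields}, with $\mathsf{Y}_a$ remaining free since $a$ is an independent coordinate on $\mathsf{M}_0$. The core step is then to impose the GNH equation $\iota_{\mathsf{X}_0}\Omega|_{\mathsf{M}_0}=\mathsf{dH}|_{\mathsf{M}_0}$: I would compute $\Omega(\mathsf{X}_0,\mathsf{Y}_0)$ by substituting \eqref{tangent_vector_fields} into the canonical symplectic form, and compute $\mathsf{dH}(\mathsf{Y}_0)$ from the $q$-dependence of \eqref{H}, using properties (P5)-(P8) to move $\mathrm{d}_A$ and brackets into convenient positions. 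Since the equation must hold for all $\mathsf{Y}_0$, I would collect the coefficients of $\mathsf{Y}_\phi$, $\mathsf{Y}_A$, and $\mathsf{Y}_a$ separately. The $\mathsf{Y}_a$-coefficient does not pair with any component of $\mathsf{X}_0$ (because $\mathsf{Y}_{\mathrm{p}a}=0$), so its vanishing for all $\mathsf{Y}_a$ produces the secondary constraint \eqref{constraint} and thereby defines $\mathsf{M}_1$ as in \eqref{M1}. The $\mathsf{Y}_\phi$- and $\mathsf{Y}_A$-coefficients supply the equations \eqref{eqs:X} and leave $\mathsf{X}_a$ free. Finally, tangency of $\mathsf{X}_0$ to $\mathsf{M}_1$ requires the derivative of $\langle\cdot\wed\mathcal{C}\rangle$ along $\mathsf{X}_0$ to vanish on $\mathsf{M}_1$, which after repeated use of (P4)-(P8) yields \eqref{tangency}.

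The main technical subtlety is handling the possible degeneracy of $\langle\cdot,\!\cdot\rangle$: the naive implication ``$\int_\Sigma\langle Y\wed Z\rangle=0$ for all $Y$ $\Rightarrow$ $Z=0$'' fails in general and delivers only $\langle\cdot\wed Z\rangle=0$, which is precisely why the constraints and equations \eqref{constraint}, \eqref{eqs:X}, and \eqref{tangency} are all written in the weak form $\langle\cdot\wed(\text{stuff})\rangle=0$. In the semisimple case with $\langle\cdot,\!\cdot\rangle=\kappa(\cdot,\!\cdot)$, the Cartan-Killing form is non-degenerate and everything collapses to strong equalities, but keeping the weak form throughout ensures the analysis applies to any $\mathrm{Ad}(G)$-invariant bilinear form. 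No further constraint-generation step is needed at this stage: \eqref{tangency} is a condition on $\mathsf{X}_0$, not a new constraint on phase-space points, so the proposition captures the full output of the algorithm up to this point.
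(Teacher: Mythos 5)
Your proposal follows the same route as the paper: the proposition is proved by carrying out the GNH algorithm exactly as you describe (fiber derivative and primary constraints, energy and Hamiltonian on $\mathsf{M}_0$, tangent vector fields, the equation $\mathsf{dH}(\mathsf{Y}_0)=\Omega(\mathsf{X}_0,\mathsf{Y}_0)$ split into the $\mathsf{Y}_\phi$-, $\mathsf{Y}_A$-, and $\mathsf{Y}_a$-coefficients, and finally the tangency requirement), and you correctly flag both the origin of the secondary constraint in the unpaired $\mathsf{Y}_a$ term and the need to keep everything in the weak form $\langle\cdot\wed(\,\cdot\,)\rangle=0$ when the invariant bilinear form is degenerate. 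No gaps; this matches the paper's argument.
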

\begin{remark}\label{remark_ps} When \(\langle \cdot,\!\cdot\rangle\) is non-degenerate, we can write  
\begin{align*}
&\mathrm{p}_\phi(\cdot)=\int_\Sigma\langle \cdot\wed p_\phi \rangle\,, \nonumber\\
&\mathrm{p}_A(\cdot)=\int_\Sigma \langle \cdot \wed p_A \rangle\,,\\
&\mathrm{p}_{a}(\cdot)\,=\,\int_\Sigma\langle \cdot\wed p_a \rangle\,. 
\end{align*}
From \eqref{momenta} we obtain \(p_\phi=[\mathrm{d}_A\phi\wed F_A]\,,\, p_A=\frac{1}{2}[\mathrm{d}_A\phi\wed \mathrm{d}_A\phi]\,, \, p_a=0\). Then, we have \(\mathsf{M_1}\subset\mathsf{M_0}\subset Q\times P\) where
\[
Q=\Omega^0(\Sigma,\mathfrak{g})\times\Omega^1(\Sigma,\mathfrak{g})\times\Omega^0(\Sigma,\mathfrak{g})\,,\quad 
P=\Omega^3(\Sigma,\mathfrak{g})\times\Omega^2(\Sigma,\mathfrak{g})\times\Omega^3(\Sigma,\mathfrak{g})\,, 
\]
and
\begin{align*}
 \mathsf{M_0}&=\{((\phi,A,a),(p_\phi,p_A,p_a))\in Q\times P\,:\, p_\phi=[\mathrm{d}_A\phi\wed F_A]\,,\, p_A=\frac{1}{2}[\mathrm{d}_A\phi\wed \mathrm{d}_A\phi]\,, \, p_a=0 \}\,, \\
  \mathsf{M_1}&=\{((\phi,A,a),(p_\phi,p_A,p_a))\in \mathsf{M_0}\,:\, \mathrm{d}_A[\phi\wed[F_A\wed\phi]]=0 \} \,.
\end{align*}
  
\end{remark}

Several comments are in order now. Equations \eqref{eqs:X} give the components of the Hamiltonian vector fields that define the evolution of the model. They are a central element of the Hamiltonian formulation. In the present case, they can be written as homogeneous equations in the variables
\begin{subequations}\label{eq:Z}
\begin{align}
&\mathsf{Z}_\phi:=\mathsf{X}_\phi-[\phi\wed a]\,,\\
&\mathsf{Z}_A:=\mathsf{X}_A-\mathrm{d}_Aa\,,
\end{align}
\end{subequations}
so these equations can always be solved without any extra constraints arising (recall also that $\mathsf{X}_a$ is arbitrary at this stage). Once solved, it is crucial to require that the tangency condition \eqref{tangency} holds because, in principle, new constraints or extra conditions on the Hamiltonian vector fields can originate from this consistency condition. In the present case, it is possible to discuss the tangency condition without explicitly knowing the form of the solutions to \eqref{eqs:X}. To this end, we plug \eqref{eq:Z} into  \eqref{eqs:X} to obtain $\langle\cdot \wed \mathcal{A}\rangle=0$ and $\langle\cdot\wed \mathcal{B}\rangle=0$ where
\begin{subequations}
\begin{align}
  & \mathcal{A}:= [\mathsf{Z}_A\wed[F_A\wed\phi]]+[F_A\wed[\mathsf{Z}_A\wed\phi]]\,,\label{eeq1}\\
  & \mathcal{B}:= [\mathsf{Z}_\phi\wed [F_A\wed\phi]]-[\mathrm{d}_A\phi\wed[\mathsf{Z}_A\wed\phi ]]-[\phi\wed[\mathsf{Z}_\phi\wed F_A]] +[\phi\wed[\mathsf{Z}_A\wed \mathrm{d}_A\phi]]\label{eeq2}\,.
\end{align}
\end{subequations}
A short computation using the results of Lemma \ref{primer_lema} proves that \eqref{tangency} is equivalent to 
\begin{equation}
    \langle\cdot\wed\Big(\mathrm{d}_A\mathcal{B}-[\phi\wed\mathcal{A}]+[a\wed\mathcal{C}]\Big)\rangle=0-\langle[\cdot\wed\phi]\wed \mathcal{A}\rangle+\langle[\cdot\wed a]\wed\mathcal{C}\rangle=0\,.
\end{equation}
\begin{remark}
    The constraint \eqref{constraint} can be obtained by pulling back the field equation \eqref{field_eqs_semisimple_2} to the slices of the foliation of $M $ defined by the function $t$. However, as the other equation \eqref{field_eqs_semisimple_1} is a top-form in $M$, its pullback vanishes.
\end{remark}

\begin{remark} 
In the case where \(\langle \cdot,\!\cdot\rangle\) is non-degenerate, if we define (see remark \ref{remark_ps})
\begin{align*}
    \mathcal{C}_\phi&:=p_\phi-[\mathrm{d}_A\phi\wed F_A]\,,\\
    \mathcal{C}_A&:=p_A-\frac{1}{2}[\mathrm{d}_A\phi\wed \mathrm{d}_A\phi]\,,
\end{align*}
then 
\[
\mathrm{d}_A\mathcal{C}_A+[\phi\wed \mathcal{C}_\phi]\overset{\eqref{constraint}}{=}\mathrm{d}_Ap_A+[\phi\wed p_\phi]+\mathcal{C}\,.
\]
Hence, on the constraint surface $\mathsf{M}_1$, we have
\[\mathrm{d}_Ap_A+[\phi\wed p_\phi]=0\,.\]
This is precisely the constraint appearing in \cite{HM} for $G=SU(2)$ written in terms of differential forms. It has the expected form of the generator of internal $G$-gauge transformations (the Gauss law).
\end{remark}

%
%
\section{Finding the Hamiltonian vector fields}{\label{sec_ham_vector_fields}}

To understand and disentangle the dynamics of the model considered here, it is necessary to solve \eqref{eqs:X}. Although this is a simple problem in principle --just a set of linear algebraic equations-- in practice, it requires some effort. An important observation to make at this point is that we are interested in the solutions to these equations when the secondary constraints \eqref{constraint} hold. Notice that if the determinant of the matrix that defines the linear system vanishes when the constraints hold, the system will have non-trivial solutions depending on some arbitrary objects defined on the phase space of the system. Hence, in addition to the arbitrary $a$, other arbitrary local gauge parameters may be present.

The resolution of \eqref{eqs:X} for general Lie algebras is beyond the scope of the present paper. Here, we will concentrate on a concrete (and physically relevant) example: the semisimple Lie algebra $\mathfrak{g}=\mathfrak{su}(2)$. In this case \[\langle \cdot,\!\cdot \rangle =-\frac{1}{2}\kappa(\cdot,\!\cdot) \] 
turning \(\mathfrak{su}(2)\) into a Euclidean vector space. The equations to be solved for $\mathsf{Z}_\phi$ and $\mathsf{Z}_A$ are
\begin{subequations}
\begin{align}
  & [\mathsf{Z}_A\wed[F_A\wed\phi]]+[F_A\wed[\mathsf{Z}_A\wed\phi]]=0\,, \label{eeeq1}\\
  & [\mathsf{Z}_\phi\wed [F_A\wed\phi]]-[\mathrm{d}_A\phi\wed[\mathsf{Z}_A\wed\phi ]]-[\phi\wed[\mathsf{Z}_\phi\wed F_A]] +[\phi\wed[\mathsf{Z}_A\wed \mathrm{d}_A\phi]]=0\label{eeeq2}\,.
\end{align}
\end{subequations}
subject to the condition that the secondary constraints \eqref{constraint}, which in this case reduce to $\mathrm{d}_A[\phi\wed[F_A\wed\phi]]=0$, hold.

We proceed by choosing a $\langle \cdot,\!\cdot \rangle$-orthonormal  basis \((b_i)\) of $\mathfrak{su}(2)$ such that the structure constants are $\tensor{C}{^k_i_j}=\tensor{\varepsilon}{^k_i_j}$, where $\tensor{\varepsilon}{^k_i_j}=\delta^{kl}\varepsilon_{lij}$ and $\varepsilon_{ijl}$ denotes the totally antisymmetric Levi-Civita symbol such that $\varepsilon_{123}=+1$. The Cartan-Killing form in this basis is $k_{ij}=\varepsilon_{ik}^{\phantom{ij}l}\varepsilon_{jl}^{\phantom{jl}k}=-2\delta_{ij}$. By working in the chosen Lie algebra basis and defining 
\[\tensor{\mathsf{P}}{_{i_1}_{i_2}^{j_1}^{j_2}}=2\delta_{i_1}^{j_1}\delta_{i_2}^{j_2}-\delta_{i_2}^{j_1}\delta_{i_1}^{j_2}-\delta_{{i_1}{i_2}}\delta^{{j_1}{j_2}},\] the equations \eqref{eqs:X} become
\begin{subequations}\label{eqs:Z with P}
\begin{align}
&\tensor{\mathsf{P}}{_i_l^j^k}{\mathsf Z}^i_{A}\wedge F_j\phi_k=0\,,\label{eq1_components}\\
&\tensor{\mathsf{P}}{_i_l^j^k}\big(\mathsf{Z}^i_{\phi}F_k-(\mathrm{d}_A\phi)^i\wedge \mathsf{Z}_{Ak}\big)\phi_j=0\,,\label{eq2_components}
\end{align}
\end{subequations}
and the constraint \eqref{constraint} is
\begin{equation}\label{constraint_components}
\tensor{\mathsf{P}}{_i_l^j^k}(\mathrm{d}_A\phi)^i\wedge F_k\,\phi_j=0\,.
\end{equation}

As our purpose is to understand the gauge symmetries of the model and, in particular, figure out what the gauge parameters are, we will solve the equations under the simplifying condition that the $\mathrm{d}_A\phi$ are locally (i.e., on an open subset $U$ of $\Sigma$) linearly independent. This is one of the main hypotheses used in \cite{HM} to analyze the dynamics of this system, where it is required that
\[\langle \mathrm{d}_A \phi \wed [ \mathrm{d}_A \phi \wed \mathrm{d}_A \phi]\rangle\in \Omega^3(\Sigma)\] is a volume form and also that
\[\langle \mathrm{d}_A \phi \otp \mathrm{d}_A \phi \rangle =\kappa_{ij} (\mathrm{d}_A \phi )^i\otimes (\mathrm{d}_A \phi )^j\] is a non-degenerate metric on $\Sigma$ (whereas $\langle \mathrm{d}_{\mathrm{A}} \Phi \otp \mathrm{d}_{\mathrm{A}} \Phi \rangle =\kappa_{ij} (\mathrm{d}_{\mathrm{A}} \Phi)^i\otimes (\mathrm{d}_{\mathrm{A}} \Phi )^j$ is degenerate on $M$).

Let us introduce the following local expansions for $F\in\Omega^2(U,\mathfrak{g})$ and $\mathsf{Z}_A\in\Omega^1(U,\mathfrak{g})$
\begin{subequations}
\begin{align}\label{expansions}
&F_i=\frac{1}{2}\mathbb{F}_{ij}\varepsilon^{jkl}(\mathrm{d}_A\phi)_k\wedge (\mathrm{d}_A\phi)_l\,,\\
&\mathsf{Z}_{Ai}=\mathbb{Z}_{ij}(\mathrm{d}_A\phi)^j\,.
\end{align}
\end{subequations}
For notational convenience, we also define
\begin{equation}\label{other_notations}
\mathbb{Z}:=\delta^{ij}\mathbb{Z}_{ij}\,,\quad\mathbb{A}_i:=\varepsilon_{ijk}\mathbb{Z}^{jk}\,,\quad\mathbb{L}_i:=(\phi.\mathbb{Z})_i\,,\quad \mathbb{F}:=\delta^{ij}\mathbb{F}_{ij}\,.
\end{equation}
where we denote $(A.v)_i:=A_{ij}v^j$ and $(v.A)_i:=v^jA_{ji}$ (adjacent indices contracted). With this choice, \eqref{eqs:Z with P} is equivalent to
\begin{subequations}
\begin{align}
  & \tensor{\mathsf{P}}{_i_l^j^k}\mathbb{Z}^{im}\mathbb{F}_{jm}\phi_k=0\,,\label{eeq1_components} \\
  & \tensor{\mathsf{P}}{_i_l^j^k}\big(\mathsf{Z}^i_{\phi}\mathbb{F}_{ks}-\tensor{\varepsilon}{^i^r_s}\mathbb{Z}_{kr}\big)\phi_j=0\,,\label{eeq2_components}
\end{align}
\end{subequations}
and the secondary constraints become
\begin{equation}\label{constraint_Dphi}
\mathbb{C}_i:=2(\mathbb{F}.\phi)_i-(\phi.\mathbb{F})_{i}- \mathbb{F}\phi_i=0\,.
\end{equation}
Our strategy is to, first, solve \eqref{eeq2_components} for $\mathbb{Z}_{ij}$ in terms of $\mathsf{Z}_{\phi i}$ and then find out if \eqref{eeq1_components} imposes any conditions on the $\mathsf{Z}_{\phi i}$ and/or any other free parameters that may show up in the solution to \eqref{eeq2_components}. Before proceeding, a comment is in order. Solving the equations can be done without much effort by using standard algebraic manipulation software. It is harder, though, to simplify and find a compact and transparent way to write the solutions so that their Lie-algebra ``index structure'' is apparent, something necessary to interpret and use them. 

For the purpose of solving the equations for the Hamiltonian vector fields, the following simple Lemma comes in handy
\begin{lemma}\label{lemma_eq}
Let $\phi,Q\in \mathfrak{su}(2)$, with $\phi\neq 0$. Then the equation
\begin{equation}\label{equation_z}
[\phi,z]=Q
\end{equation}
in the unknown $z\in\mathfrak{su}(2)$,  admits solutions if and only if $\phi.Q=0$. In this case, the solutions are given by
\begin{equation}\label{solution_z}
z=-\frac{1}{\phi^2}[\phi,Q]+N\phi\,,
\end{equation}
with $N\in\mathbb{R}$ arbitrary and $\phi^2:=\phi.\phi$.
\end{lemma}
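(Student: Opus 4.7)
The plan is to exploit the identification of $\mathfrak{su}(2)$ with $(\mathbb{R}^3,\times)$ provided by the $\langle\cdot,\!\cdot\rangle$-orthonormal basis $(b_i)$ with structure constants $\varepsilon^{k}{}_{ij}$: under this identification the Lie bracket becomes the cross product and $\langle\cdot,\!\cdot\rangle$ becomes the Euclidean dot product, so the operator $\mathrm{ad}_\phi : z \mapsto [\phi,z]$ corresponds to the antisymmetric map $z\mapsto\phi\times z$.

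First I would establish necessity of $\phi.Q=0$. This is immediate from the invariance of $\langle\cdot,\!\cdot\rangle$ (property (P4) specialized to $0$-forms): if $[\phi,z]=Q$, then
\[
\phi.Q \;=\; \langle\phi,[\phi,z]\rangle \;=\; \langle[\phi,\phi],z\rangle \;=\; 0,
\]
or equivalently $\phi\cdot(\phi\times z)=0$ in the cross-product picture.

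Next, for sufficiency, I would verify by direct substitution that $z_0 := -\phi^{-2}[\phi,Q]$ solves the equation whenever $\phi.Q=0$. The key ingredient is the double-bracket identity
\[
[\phi,[\phi,Q]] \;=\; (\phi.Q)\,\phi \;-\; \phi^2\,Q,
\]
which is the vector triple product formula, or equivalently follows from one line of index manipulation using $\varepsilon^{k}{}_{ij}\varepsilon^{j}{}_{lm}=\delta^{k}{}_{m}\delta_{il}-\delta^{k}{}_{l}\delta_{im}$. Under the assumption $\phi.Q=0$ this gives $[\phi,z_0]=-\phi^{-2}(-\phi^2 Q)=Q$, as required.

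Finally, the general solution follows by describing $\ker\mathrm{ad}_\phi$. Because $\phi\neq 0$, the map $z\mapsto\phi\times z$ has one-dimensional kernel $\mathbb{R}\phi$, so any two solutions of $[\phi,z]=Q$ differ by a real multiple $N\phi$, yielding the expression in \eqref{solution_z}. No step presents a genuine obstacle; the only mild care needed is with conventions, so that the sign and the factor $-\phi^{-2}$ in the particular solution come out correctly.
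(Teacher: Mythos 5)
Your proposal is correct and follows essentially the same route as the paper: necessity of $\phi.Q=0$ from the invariance of $\langle\cdot,\!\cdot\rangle$, sufficiency by exhibiting the particular solution $-\phi^{-2}[\phi,Q]$ via the identity $[\phi,[\phi,Q]]=(\phi.Q)\phi-\phi^2 Q$, and the general solution from $\ker\mathrm{ad}_\phi=\mathbb{R}\phi$. The explicit framing through the $(\mathbb{R}^3,\times)$ identification is only a cosmetic difference.
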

\begin{proof}
That $\phi.Q=0$ is necessary for the compatibility is evident using (P4) of Lemma \ref{primer_lema}. To see that it is also sufficient, we make use of $[\alpha,[\beta,\gamma]]= (\alpha.\gamma) \beta -(\alpha.\beta)\gamma$ to show that a particular solution to \eqref{equation_z} is\[
 z=-\frac{1}{\phi^2}[\phi,Q]\,.
\]
On the other hand, in $\mathfrak{su}(2)$, 
\[
[\phi,z]=0\Leftrightarrow z=N \phi\,.
\]
\end{proof}

Another important result that we will use is the following:
\begin{lemma}
Let $\displaystyle \mathbb{Q}_{ls}:= \tensor{\mathsf{P}}{_i_l^j^k}\mathsf{Z}^i_{\phi}\mathbb{F}_{ks}\phi_j$ and $\mathbb{Q}:=\delta^{ij}\mathbb{Q}_{ij}$, then the constraints \eqref{constraint_Dphi} imply
\begin{equation}\label{condition_Q}
\phi^2\mathbb{Q}-3(\phi.\mathbb{Q}.\phi)=0\,.
\end{equation}
\end{lemma}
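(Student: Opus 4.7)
The plan is to carry out a direct algebraic computation, reducing the claim to two simple scalar consequences of the constraint $\mathbb{C}_i=0$. First I would expand $\mathbb{Q}_{ls}$ using the three pieces of $\tensor{\mathsf{P}}{_i_l^j^k}=2\delta_i^j\delta_l^k-\delta_l^j\delta_i^k-\delta_{il}\delta^{jk}$. A term-by-term check gives
\[
\mathbb{Q}_{ls}=2(\mathsf{Z}_\phi.\phi)\,\mathbb{F}_{ls}-(\mathsf{Z}_\phi.\mathbb{F})_s\,\phi_l-(\phi.\mathbb{F})_s\,\mathsf{Z}_{\phi l}.
\]
Tracing with $\delta^{ls}$ and separately double-contracting with $\phi^l\phi^s$ then express $\mathbb{Q}$ and $\phi.\mathbb{Q}.\phi$ explicitly in terms of a handful of scalar bilinears, namely $\mathbb{F}(\mathsf{Z}_\phi.\phi)$, $\mathsf{Z}_\phi.\mathbb{F}.\phi$, $\phi.\mathbb{F}.\mathsf{Z}_\phi$, and $(\mathsf{Z}_\phi.\phi)(\phi.\mathbb{F}.\phi)$, all times possible powers of $\phi^2$.

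Second, I would extract two scalar identities from the constraint $\mathbb{C}_i=2(\mathbb{F}.\phi)_i-(\phi.\mathbb{F})_i-\mathbb{F}\phi_i=0$. Contracting with $\phi^i$, and observing that $\phi^i(\mathbb{F}.\phi)_i$ and $\phi^i(\phi.\mathbb{F})_i$ coincide after a relabeling of dummy indices, gives the scalar identity $\phi.\mathbb{F}.\phi=\mathbb{F}\phi^2$. Contracting instead with $\mathsf{Z}_\phi^i$ yields the vector-level identity $2\,\mathsf{Z}_\phi.\mathbb{F}.\phi=\phi.\mathbb{F}.\mathsf{Z}_\phi+\mathbb{F}(\mathsf{Z}_\phi.\phi)$.

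Finally, I would substitute both identities into the combination $\phi^2\mathbb{Q}-3(\phi.\mathbb{Q}.\phi)$. The first identity turns the $(\mathsf{Z}_\phi.\phi)(\phi.\mathbb{F}.\phi)$ term into $\phi^2\mathbb{F}(\mathsf{Z}_\phi.\phi)$, at which point the factor $3$ is precisely what is needed to cancel it against the $2\phi^2(\mathsf{Z}_\phi.\phi)\mathbb{F}$ piece coming from $\phi^2\mathbb{Q}$, leaving only $-\phi^2(\mathsf{Z}_\phi.\phi)\mathbb{F}$. The second identity then rewrites the remaining $2\phi^2\,\mathsf{Z}_\phi.\mathbb{F}.\phi$ as $\phi^2(\phi.\mathbb{F}.\mathsf{Z}_\phi+\mathbb{F}(\mathsf{Z}_\phi.\phi))$, producing an exact cancellation with the $-\phi^2\phi.\mathbb{F}.\mathsf{Z}_\phi$ term and the residual $-\phi^2(\mathsf{Z}_\phi.\phi)\mathbb{F}$, giving $0$. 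I do not anticipate any conceptual obstacle here; the task is purely index bookkeeping. The only genuine risk is mis-ordering indices in the adjacent-index notation $(A.v)_i=A_{ij}v^j$ versus $(v.A)_i=v^jA_{ji}$, and keeping track of which bilinears are symmetric in $\phi\leftrightarrow\mathsf{Z}_\phi$ and which are not.
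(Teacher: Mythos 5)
Your proposal is correct and is essentially the computation the paper performs: the two scalar identities you extract from the constraint are precisely the contractions $\mathsf{Z}_\phi^i\mathbb{C}_i=0$ and $\phi^i\mathbb{C}_i=0$, and your final cancellation amounts to the paper's one-line identity $\phi^2\mathbb{Q}-3(\phi.\mathbb{Q}.\phi)=\mathsf{Z}^i_{\phi}\big(\phi^2\mathbb{C}_i-3(\mathbb{C}.\phi)\phi_i\big)$. All of your intermediate expressions (the expansion of $\mathbb{Q}_{ls}$, the traces, and the substitutions) check out.
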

\begin{proof}
A short computation gives
\[
\phi^2\mathbb{Q}-3(\phi.\mathbb{Q}.\phi)=\mathsf{Z}^i_{\phi }\Big(\phi^2\mathbb{C}_i-3(\mathbb{C}.\phi)\phi_i\Big)=0 \,,
\]
where $\mathbb{C}_i$ is the constraint \eqref{constraint_Dphi}, and hence vanishes.
\end{proof}

In order to solve \eqref{eeq2_components} we expand it as
\begin{equation}\label{equation_ALZ}
2\tensor{\varepsilon}{^i^r_s}\phi_i\mathbb{Z}_{lr}-\phi_l\mathbb{A}_s+\varepsilon_{lsr}\mathbb{L}^r=\mathbb{Q}_{ls}\,,
\end{equation}
where we have employed \eqref{other_notations}. By multiplying \eqref{equation_ALZ} by $\delta_{ls}$, $\phi_l$ and $\phi_s$, we respectively get
\begin{subequations}
\begin{align}
  & \mathbb{A}.\phi=-\frac{1}{3}\mathbb{Q}\,,\label{equation_A} \\
  & [\phi,\mathbb{L}]_s-\phi^2\mathbb{A}_s=(\phi.\mathbb{Q})_{s}\,,\label{equation_B} \\
  & [\phi,\mathbb{L}]_l-(\mathbb{A}.\phi)\phi_l=(\mathbb{Q}.\phi)_l\,.\phantom{\Big(}\label{equation_C}
\end{align}\end{subequations}
From these equations, we obtain

\begin{subequations}
\begin{align}
&\mathbb{A}_s\overset{\eqref{equation_B}}{=}\frac{1}{\phi^2}( [\phi,\mathbb{L}]_s-(\phi.\mathbb{Q})_{s})\underset{\eqref{equation_A}}{\overset{\eqref{equation_C}}{=}}\frac{1}{\phi^2}\left((\mathbb{Q}.\phi)_s-(\phi.\mathbb{Q})_{s}-\frac{1}{3}\mathbb{Q}\phi_s\right)\,,\label{equation_E}\\
&[\phi,\mathbb{L}]_l\underset{\eqref{equation_A}}{\overset{\eqref{equation_C}}{=}}(\mathbb{Q}.\phi)_l-\frac{1}{3}\mathbb{Q}\phi_l\label{equation_D}
\end{align}\end{subequations}
Here, and in the following, we assume $\phi^2(p)\neq0$ at every $p\in U$. Using now Lemma \ref{lemma_eq}, we solve \eqref{equation_D} for $\mathbb{L}_i$. The necessary condition for the solvability of this equation is \eqref{condition_Q}, and the solution
\begin{equation}\label{sol_L}
\mathbb{L}_i=-\frac{1}{\phi^2}[\phi,\mathbb{Q}.\phi]_i+N\phi_i\,,
\end{equation}
where $N$ is an arbitrary smooth function in $T^*Q$.

Once $\mathbb{A}_i$ and $\mathbb{L}_i$ have been obtained, we plug them back into \eqref{equation_ALZ}, which becomes
\[
[\phi,\mathbb{Z}_{l\cdot{}}]_s=\frac{1}{2}\mathbb{Q}_{ls}-\frac{1}{6\phi^2}\mathbb{Q}\phi_l\phi_s+\frac{1}{\phi^2}\phi_l(\mathbb{Q}.\phi)_s-\frac{1}{2\phi^2}\phi_l(\phi.\mathbb{Q})_s
-\frac{1}{2\phi^2}\phi_s(\mathbb{Q}.\phi)_l-\frac{N}{2}\varepsilon_{lsr}\phi^r\,.
\]
Again, this equation turns out to be of the type considered in Lemma \ref{lemma_eq} for each $l$ (an index which, since it is not contracted, can be effectively ignored). Hence, we can immediately solve it. Now, the necessary condition for its solvability is
\[
\phi_l\big(3(\phi.\mathbb{Q}.\phi)-\phi^2\mathbb{Q}\big)=0 \,,
\]
which, as we are assuming $\phi^2\neq0$ everywhere on $\Sigma$, is equivalent to \eqref{condition_Q}. The solution is
\begin{align}\label{Zli}
\begin{split}
\mathbb{Z}_{li}=&-\frac{1}{2\phi^2}\tensor{\varepsilon}{_i^j^k}\phi_j\mathbb{Q}_{lk}-\frac{1}{(\phi^2)^2}\phi_l[\phi,\mathbb{Q}.\phi]_i+\frac{1}{2(\phi^2)^2}\phi_l[\phi,\phi.\mathbb{Q}]_i\\
&+\frac{1}{2\phi^2}N(\phi_l\phi_i-\delta_{li}\phi^2)+\widetilde{N}_l\phi_i\,,
\end{split}
\end{align}
where $\widetilde{N}_l$, $l=1,2,3$ are three arbitrary smooth functions on $T^*Q$. 

At this point, we have arrived at \eqref{Zli} as a chain of implications; this means that the solutions to the original equation \eqref{eeq2_components} must have this form, but we have to check if after plugging \eqref{Zli} into \eqref{eeq2_components} we get conditions on the arbitrary functions $N$ and $\widetilde{N}_l$. A straightforward computation tells us that
\[
\widetilde{N}_l=-\frac{1}{(\phi^2)^2}[\phi,\phi.\mathbb{Q}]_l+\frac{1}{\phi^2}N \phi_l \,,
\]
with $N$ still arbitrary. The final solution to equation \eqref{eeq2_components} is then
\begin{align}\label{Zli2}
\begin{split}
\mathbb{Z}_{li}=&-\frac{1}{2\phi^2}\tensor{\varepsilon}{_i^j^k}\phi_j\mathbb{Q}_{lk}-\frac{1}{(\phi^2)^2}\phi_l[\phi,\mathbb{Q}.\phi]_i+\frac{1}{2(\phi^2)^2}\phi_l[\phi,\phi.\mathbb{Q}]_i\\
&-\frac{1}{(\phi^2)^2}\phi_i[\phi,\phi.\mathbb{Q}]_l+\frac{1}{2\phi^2}N(3\phi_l\phi_i-\delta_{li}\phi^2)\,.
\end{split}
\end{align}
This expression can be simplified with some effort by using the orthogonal decomposition 
\[
\mathsf{Z}_{\phi }=\frac{1}{\langle\phi\wed\phi\rangle}\big(\langle\phi\wed\mathsf{Z}_\phi\rangle\phi+[[\phi\wed \mathsf{Z}_\phi]\wed\phi]\big)
\]
The final result is given in the following Theorem, whose proof is provided by the preceding arguments.
\begin{theorem}\label{th_Z}
Let us assume that $\phi\in \Omega^0(\Sigma,\mathfrak{su}(2))$ is nowhere vanishing on $\Sigma$. Let us also assume that $(\mathrm{d}_A\phi)^i$ define a coframe on an open subset $U\subset\Sigma$, then the solution to the equation
\[{\tensor{\mathsf{P}}{_i_l^j^k}}\big(\mathsf{Z}^i_{\phi}F_k-(\mathrm{d}_A\phi)^i\wedge \mathsf{Z}_{Ak}\big)\phi_j=0\]
in the unknowns $\mathsf{Z}_A\in\Omega^1(U,\mathfrak{su}(2))$ and $\mathsf{Z}_\phi\in\Omega^0(U,\mathfrak{su}(2))$ on the subset of $T^*Q$ where the secondary constraints 
\[
\mathrm{d}_A[\phi\wed[F_A\wed\phi]]=0
\]
hold is given by $\mathsf{Z}_A=\mathsf{Z}^i_{A}(d_A\phi)_i$ with
\begin{equation}\label{solution_final}
\mathsf{Z}^i_{A}=\tensor{\varepsilon}{_l_j^k}\mathbb{F}^{il} \mathsf{Z}^j_{\phi}(\mathrm{d}_A\phi)_k+\mathsf{N}\big(3\phi^i\mathrm{d}\phi^2-2\phi^2(\mathrm{d}_A\phi)^i\big)
\end{equation}
where $\mathsf{Z}^j_{\phi}$ and $\mathsf{N}$ are arbitrary smooth functions  in $T^*Q$.
\end{theorem}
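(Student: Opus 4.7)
The plan is to carry out the chain of implications already suggested by the setup \eqref{expansions}--\eqref{constraint_Dphi} and then verify consistency. First I would exploit the hypothesis that $(\mathrm{d}_A\phi)^i$ is a coframe on $U$ to expand $F$ and $\mathsf{Z}_A$ in the basis $(\mathrm{d}_A\phi)^i$ with coefficient matrices $\mathbb{F}_{ij}$ and $\mathbb{Z}_{ij}$ as in \eqref{expansions}, and similarly recast the secondary constraint as the algebraic identity \eqref{constraint_Dphi}. Under this reduction, the differential equation \eqref{eeq2_components} becomes the purely algebraic equation \eqref{equation_ALZ} for $\mathbb{Z}_{li}$ with right-hand side $\mathbb{Q}_{ls}:=\tensor{\mathsf{P}}{_i_l^j^k}\mathsf{Z}^i_\phi \mathbb{F}_{ks}\phi_j$. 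Contracting the constraint \eqref{constraint_Dphi} against $\mathsf{Z}^i_\phi$ yields the scalar integrability relation $\phi^2\mathbb{Q}-3(\phi.\mathbb{Q}.\phi)=0$, which will play the key role in every subsequent application of Lemma \ref{lemma_eq}.

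Next I would take three projections of \eqref{equation_ALZ}: contracting with $\delta_{ls}$, $\phi_l$, and $\phi_s$ separately gives \eqref{equation_A}--\eqref{equation_C}. Combining \eqref{equation_B} and \eqref{equation_C} eliminates $[\phi,\mathbb{L}]$ and expresses $\mathbb{A}_s$ in closed form, and yields a single equation $[\phi,\mathbb{L}]_l=(\mathbb{Q}.\phi)_l-\tfrac{1}{3}\mathbb{Q}\phi_l$. The necessary compatibility $\phi.((\mathbb{Q}.\phi)-\tfrac{1}{3}\mathbb{Q}\phi)=0$ is exactly the scalar relation derived from the constraint, so Lemma \ref{lemma_eq} applies and produces $\mathbb{L}_i$ up to an additive $N\phi_i$ with $N$ an arbitrary scalar function on $T^*Q$. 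Substituting the now known $\mathbb{A}_i$ and $\mathbb{L}_i$ back into \eqref{equation_ALZ} produces, for each fixed $l$, a further equation of the form $[\phi,\mathbb{Z}_{l\cdot}]_s=(\text{explicit})_s$; its solvability condition reduces again to $\phi^2\mathbb{Q}-3(\phi.\mathbb{Q}.\phi)=0$, and a second application of Lemma \ref{lemma_eq} yields \eqref{Zli} carrying free functions $N$ and $\widetilde{N}_l$.

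So far every step has been only a necessary condition, so I would then plug \eqref{Zli} back into \eqref{eeq2_components} to check sufficiency. A direct (but tedious) computation using the properties in Lemma \ref{primer_lema} forces the relation $\widetilde{N}_l=-\tfrac{1}{(\phi^2)^2}[\phi,\phi.\mathbb{Q}]_l+\tfrac{1}{\phi^2}N\phi_l$, leaving $N$ as the only genuine arbitrary scalar and producing \eqref{Zli2}. The last step, and the one I expect to be the main obstacle, is the cosmetic but delicate repackaging of \eqref{Zli2} into the intrinsic form \eqref{solution_final}: using the orthogonal decomposition $\mathsf{Z}_\phi=\tfrac{1}{\langle\phi\wed\phi\rangle}\bigl(\langle\phi\wed\mathsf{Z}_\phi\rangle\phi+[[\phi\wed\mathsf{Z}_\phi]\wed\phi]\bigr)$ together with $\mathrm{d}\phi^2=2\phi_k(\mathrm{d}_A\phi)^k$, the $\mathbb{Q}$-dependent piece collapses into $\tensor{\varepsilon}{_l_j^k}\mathbb{F}^{il}\mathsf{Z}^j_\phi(\mathrm{d}_A\phi)_k$ and the $N$-piece becomes $\mathsf{N}\bigl(3\phi^i\mathrm{d}\phi^2-2\phi^2(\mathrm{d}_A\phi)^i\bigr)$. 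The difficulty lies entirely in executing this rearrangement while keeping the Lie-algebra index structure manifest, since symbolic algebra produces many equivalent forms and only the one in \eqref{solution_final} exhibits the gauge-parameter interpretation transparently.
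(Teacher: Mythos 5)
Your proposal reproduces the paper's own argument essentially step for step: coframe expansion of $F$ and $\mathsf{Z}_A$, reduction of \eqref{eeq2_components} to the algebraic system \eqref{equation_ALZ}, the three contractions with $\delta_{ls}$, $\phi_l$, $\phi_s$, two applications of Lemma \ref{lemma_eq} with the solvability condition \eqref{condition_Q} supplied by the constraint, the back-substitution that fixes $\widetilde{N}_l$ and leaves only $N$ free, and the final repackaging via the orthogonal decomposition of $\mathsf{Z}_\phi$. This is the same route the paper takes, and it is correct.
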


Several comments are in order now. 
\begin{remark}
    It is a simple exercise to show that \eqref{solution_final} satisfies \eqref{eeq1_components} and \eqref{eeq2_components} when the secondary constraints \eqref{constraint_Dphi} hold. 
\end{remark}
\begin{remark}
The Hamiltonian vector fields tangent to the subset of the phase space $T^*Q$ defined by the primary and secondary constraints depend on seven arbitrary functions in phase space: $a^i$, $\mathsf{Z}^i_{\phi}$, and $\mathsf{N}$. They have the form
\begin{align}\label{X}
\begin{split}
&\mathsf{X}_{\phi}=[\phi\wed a]+\mathsf{Z}_\phi\,,\\
&\mathsf{X}_A=\mathrm{d}_Aa+\mathsf{Z}_A\,,\\
&\mathsf{X}_{a}\,\quad \mathrm{arbitrary}\,.
\end{split}
\end{align}
with $\mathsf{Z}_A$ given by the expression appearing in Theorem \ref{th_Z}.     
\end{remark}
\begin{remark}
    On the basis of the Lie algebra that we are using, it is possible to show that, if we write $\mathsf{Z}^j_{\phi}=\xi\iprod (d_A\phi)^j$ with arbitrary $\xi \in \mathfrak{X}(\Sigma)$, then
    \[
    \tensor{\varepsilon}{_l_j^k}\mathbb{F}^{il}\mathsf{Z}^j_{\phi}(d_A\phi)_k=\pounds_\xi A^i-d_A(\xi\iprod A^i)\,,
    \]
and the final form of the Hamiltonian vector field is given by 
\begin{align}
\begin{split}
&\mathsf{X}_{\phi}=\pounds_\xi \phi +[\phi\wed (a-\xi \iprod A)]\,,\\
&\mathsf{X}_A=\pounds_\xi A+ \mathrm{d}_A (a-\xi \iprod A) + \mathsf{N} \Big( 3\mathrm{d}\langle \phi \wed  \phi \rangle \phi -2\langle \phi \wed \phi \rangle \mathrm{d}_A \phi \Big)\,,\\
&\mathsf{X}_{a}\,\quad \mathrm{arbitrary}\,.\phantom{\Big(}
\end{split}
\end{align} 
As we can see, we have the expected internal $SU(2)$ gauge transformations, with local gauge parameter $\Lambda:=a-\xi \iprod A$, and diffeomorphisms on $\Sigma$ generated by the vector field $\xi$. 
\begin{align*}
\delta_\xi(\phi,A):=(\pounds_\xi\phi,\pounds_\xi A)\,,\quad \delta_\Lambda(\phi,A):=([\phi\wed \Lambda],\mathrm{d}_A\Lambda )\,.
\end{align*}
In addition to these, we have an additional symmetry controlled by the $\mathsf{N}$-dependent term appearing in $\mathsf{X}_A$.
\end{remark}
\begin{remark}
By following Dirac's approach to the treatment of singular Hamiltonian systems, it is possible to find the linear combinations of primary constraints which are first-class and generate (via Poisson brackets) the 3-dimensional diffeomorphisms described by the Hamiltonian vector fields shown above. The remaining primary constraints are second-class. As we are following the GNH approach, we restrict ourselves to working on the primary constraint submanifold and, hence, we are not looking for these first-class constraints here. 
\end{remark}

%
%
\section{Comments and conclusions}{\label{sec_comments}}

In this paper, we have studied the Hamiltonian formulation of a recently proposed modification of the Husain-Kucha\v{r} model [that we refer to here as the Husain-Mehmood (HM) model] in order to interpret its local gauge symmetries and, in particular, to study how 3-dimensional diffeomorphisms show up. 

Within the theoretical physics community the method of choice to deal with such singular --constrained-- systems is the one developed by Dirac, frequently known as Dirac's ``algorithm''. This method was originally phrased in purely mechanical terms and its geometric meaning is, to this day, often put aside, although geometric interpretations are available \cite{BDMV}. Other similar --but not identical-- approaches to the same problem have been developed over the years (see, for instance, \cite{GNH1, GNH2, GNH3, gotay1979presymplectic}). These have a distinct geometric flavor and provide neat interpretations of the dynamics. From a conceptual point of view, they are powerful tools leading to clean Hamiltonian descriptions of singular field theories. They can be used, for instance, to obtain the Ashtekar formulation of GR in a particularly appealing way \cite{Barbero2024}. 

Very often Dirac's method is used as an intermediate step toward quantization. The key idea of Dirac's proposal to quantize constrained systems is to turn the so-called first-class constraints into operators in an appropriate Hilbert space and select the subspace given by the intersection of their kernels as the mathematical arena for physics (the \textit{physical Hilbert space}). If second-class constraints are present, the method can be adapted by introducing the so-called Dirac brackets.

An unwanted consequence of this way to proceed is that the details of the dynamics in Hamiltonian form, encoded in the Hamiltonian vector fields, are usually neglected as they are deemed irrelevant for quantization. Another frequent mistake is that the equations for the multipliers introduced in Dirac's procedure are almost never solved. This is very dangerous as it is known (and certainly recognized by Dirac himself \cite{Dirac, henneaux}) that the presence of arbitrary parameters in their solutions signals the existence of linear combinations of primary constraints that are first-class and, hence, generate gauge transformations. This is likely the reason why the 3-dimensional diffeos are not directly found in \cite{HM} but, instead, (incorrectly) interpreted in terms of internal $SU(2)$ transformations. A complete understanding of the nature of the constraints (in particular regarding their first or second-class character) is also needed to ``count'' the degrees of freedom as is usually done in physics.

It is important to point out that the equations for the multipliers appearing in Dirac's analysis are, in fact, identical to the equations for the Hamiltonian vector fields  \eqref{eeq1_components} and \eqref{eeq2_components}, so their solution cannot be sidelined. In our experience, most of the work needed to get a complete Hamiltonian description of a singular field theory, either in the Dirac or GNH approaches, goes into solving the equations for the Hamiltonian vector fields and checking the all-important tangency conditions. This is the reason why we have devoted a significant part of this paper to the resolution of the equations for the Hamiltonian vector fields. 

There are interesting generalizations of the present work. They would involve changing the gauge group \(G\) and/or considering non-trivial principal bundles. In the latter case the curvature \(F_A\) can be thought of as a section of the adjoint bundle \(\mathrm{ad} P\rightarrow M\), i.e. \(F_A\in \Omega^2(M,\mathrm{ad} P)\),  and consequently \(\Phi\in  \Omega^0(M,\mathrm{ad} P)\). For these models, the variational principle would be formally the same as \eqref{action_HM}, but some details of the Hamiltonian analysis will change.

\printbibliography

\end{document}